\documentclass[acmsmall,10pt]{acmart}

\acmJournal{PACMPL}
\acmVolume{1}
\acmNumber{CONF} 
\acmArticle{1}
\acmYear{2019}
\acmMonth{7}
\acmDOI{} 
\startPage{1}

\setcopyright{none}

\bibliographystyle{ACM-Reference-Format}
\citestyle{acmauthoryear}   


\usepackage{booktabs}   
\usepackage{subcaption} 

\usepackage{amsmath,amssymb,amsfonts}
\usepackage{algorithmicx}
\usepackage{graphicx}
\usepackage{subcaption}
\usepackage{textcomp}
\usepackage{xcolor}
\usepackage{wrapfig}
    
\usepackage{algorithm}
\usepackage[noend]{algpseudocode}
\usepackage{listings}
\usepackage{enumitem}
\usepackage{balance}
\DeclareMathSizes{10}{10}{10}{10}

\usepackage{multicol} 
\algtext*{EndWhile}
\algtext*{EndFor}
\algtext*{EndIf}
\algtext*{EndFunction}

\newcommand\TRUE{\textnormal{\textbf{true}}}
\newcommand\FALSE{\textnormal{\textbf{false}}}

\algrenewcommand\algorithmicindent{1em}
\algdef{S}[IF]{IfNoThen}[1]{\algorithmicif\ #1}

\algblockdefx[StructBlock]{Struct}{EndStruct} [1]{\textbf{struct} #1} [0]{}
\algtext*{EndStruct}
\algblockdefx[EnumBlock]{Enum}{EndEnum} [1]{\textbf{enum} #1} [0]{}
\algtext*{EndEnum}
\algblockdefx[ClassBlock]{Class}{EndClass} [1]{\textbf{class} #1} [0]{}
\algtext*{EndClass}
\algblockdefx[MacroBlock]{Define}{EndDefine} [1]{\textbf{define} #1} [0]{}
\algtext*{EndDefine}
\algblockdefx[InlineBlock]{Inline}{EndInline} [2]{\textbf{inline function} \textsc{#1}(#2)} [0]{}
\algtext*{EndInline}
\algblockdefx[SwitchBlock]{Switch}{EndSwitch} [1]{\textbf{switch} (#1)} [0]{}
\algtext*{EndSwitch}
\algblockdefx[CaseBlock]{Case}{EndCase} [1]{\textbf{case: #1}} [0]{}
\algtext*{EndCase}

\usepackage[utf8]{inputenc}
\usepackage{tikz} 
\usepackage[T1]{fontenc}
\usetikzlibrary{%
  arrows,%
  calc,
  shapes,
  arrows,
  shapes.misc,
  shapes.arrows,
  chains,%
  matrix,%
  positioning,
  scopes,%
  decorations.pathmorphing,
  shadows%
}
\usepackage{amsmath}
\usepackage{relsize}  

\usepackage{amsthm}
\theoremstyle{definition}
\newtheorem{principle}{Principle}

\begin{document}

\title[Quantifiability]{Quantifiability: Concurrent Correctness from First Principles}         


\author{Victor Cook}
\orcid{0000-0002-9852-2581}             
\affiliation{
  \department{Department of Computer Science}              
  \institution{University of Central Florida}            
  \streetaddress{4000 Central Florida Blvd.}
  \city{Orlando}
  \state{Florida}
  \postcode{32816}
  \country{USA}                    
}
\email{victor.cook@knights.ucf.edu}          

\author{Christina Peterson}
\orcid{0000-0002-8070-7633}             
\affiliation{
  \department{Department of Computer Science}             
  \institution{University of Central Florida}           
  \streetaddress{4000 Central Florida Blvd.}
  \city{Orlando}
  \state{Florida}
  \postcode{32816}
  \country{USA}                   
}
\email{clp8199@knights.ucf.edu}         

\author{Zachary Painter}
\orcid{0000-0001-8334-8237}             
\affiliation{
  \department{Department of Computer Science}             
  \institution{University of Central Florida}           
  \streetaddress{4000 Central Florida Blvd.}
  \city{Orlando}
  \state{Florida}
  \postcode{32816}
  \country{USA}                   
}
\email{zacharypainter@knights.ucf.edu}         

\author{Damian Dechev}
\orcid{0000-0002-0569-3403}             
\affiliation{
  \department{Department of Computer Science}             
  \institution{University of Central Florida}           
  \streetaddress{4000 Central Florida Blvd.}
  \city{Orlando}
  \state{Florida}
  \postcode{32816}
  \country{USA}                   
}
\email{dechev@cs.ucf.edu}         

\begin{abstract}
Architectural imperatives due to the slowing of Moore's Law, the broad acceptance of relaxed semantics and the O($n!$) worst case verification complexity of generating sequential histories motivate a new approach to concurrent correctness.
Desiderata for a new correctness condition are that it be independent of sequential histories, compositional, flexible as to timing, modular as to semantics and free of inherent locking or waiting. 

We propose \textit{Quantifiability}, a novel correctness condition based on intuitive first principles.   Quantifiability models a system in vector space to launch a new mathematical analysis of concurrency.  The vector space model is suitable for a wide range of concurrent systems and their associated data structures.  
This paper formally defines quantifiability and demonstrates useful properties such as compositionality.
Analysis is facilitated with linear algebra, better supported and of much more efficient time complexity than traditional combinatorial methods.  
We present results showing that quantifiable data structures are highly scalable due to the usage of relaxed semantics and propose \textit{entropy} to evaluate the implementation trade-offs permitted by quantifiability. 
\end{abstract}

\begin{CCSXML}
<ccs2012>
<concept>
<concept_id>10003752.10003809.10011778</concept_id>
<concept_desc>Theory of computation~Concurrent algorithms</concept_desc>
<concept_significance>500</concept_significance>
</concept>
<concept>
<concept_id>10003752.10010124.10010138.10010142</concept_id>
<concept_desc>Theory of computation~Program verification</concept_desc>
<concept_significance>500</concept_significance>
</concept>
<concept>
<concept_id>10010147.10011777.10011778</concept_id>
<concept_desc>Computing methodologies~Concurrent algorithms</concept_desc>
<concept_significance>500</concept_significance>
</concept>
</ccs2012>
\end{CCSXML}

\ccsdesc[500]{Theory of computation~Concurrent algorithms}
\ccsdesc[500]{Theory of computation~Program verification}
\ccsdesc[500]{Computing methodologies~Concurrent algorithms}

\keywords{Concurrent correctness, multicore performance, formal methods}  

\maketitle

\section{Introduction}
As predicted \cite{Shavit2011-gd}, concurrent data structures have arrived at a tipping point where change is inevitable.  These drivers converge to motivate new thinking about correctness:
\begin{itemize}
    \item Architectural demands to utilize multicore and distributed resources \cite{National_Research_Council2011-ia}.
    \item General acceptance of relaxed semantics \cite{Gruber2016-rz, Haas2013-bn, Wimmer2015-yh, Henzinger2013-pv, Alistarh2018-tw, Adhikari2013-iu, Shavit2015-ce, Rihani2015-cf, Derrick2014-cl, Afek2010-xv}.
    \item  The intractable $O(n!)$ complexity of concurrent system models \cite{Alur1996-mj} prompting the search for reductions \cite{Adhikari2013-iu, Emmi2017-pl,  Liang2013-eh, Derrick2011-hy, OHearn2010-dn, Guerraoui2012-hw, Baier2008-wy, Amit2007-ti, Elmas2010-bw, Derrick2007-te, Tofan2014-ja, Khyzha2016-lv, Baumler2011-jc, Bouajjani2017-ar, Wen2018-le, Singh2016-hh, Alistarh2018-tw, Schellhorn2014-mr, Feldman2018-be, Khyzha2017-ez}.
\end{itemize}
There are a number of correctness conditions for concurrent systems~\cite{papadimitriou1979serializability, Lamport1979-rd, herlihy1990linearizability, Herlihy2011-yj, aspnes1994counting, Afek2010-xv,ou2017checking}. 
The difference between the correctness conditions resides in the allowable method call orderings.
Serializability~\cite{papadimitriou1979serializability} places no constraints on the method call order.
Sequential consistency~\cite{Lamport1979-rd} requires that each method call takes effect in program order.
Linearizability~\cite{herlihy1990linearizability} requires that each method call takes effect at some instant between its invocation and response.
A correctness condition $\mathcal{P}$ is \textit{compositional} if and only if, whenever each object in the system satisfies $\mathcal{P}$, the system as a whole satisfies $\mathcal{P}$~\cite{Herlihy2011-yj}.
Linearizability is a desirable correctness condition for systems of many shared objects where the ability to reason about correctness in a compositional manner is essential.
Sequential consistency is suitable for a standalone system without compositional objects that requires program order of method calls to be preserved such as a hardware memory interface.
Other correctness conditions~\cite{aspnes1994counting,Afek2010-xv,ou2017checking} are defined that permit relaxed behaviors of method calls to obtain performance enhancements in concurrent programs.

These correctness conditions require a concurrent history to be equivalent to a sequential history.
While this way of defining correctness enables concurrent programs to be reasoned about using verification techniques for sequential programs~\cite{guttag1978abstract, hoare1978proof}, it imposes several inevitable limitations on a concurrent system.
Such limitations include 1) requiring the specification of a concurrent system to be described as if it were a sequential system, 2) restricting the method calls to respect data structure semantics and to be ordered in a way that satisfies the correctness condition, leading to performance bottlenecks, and 3) burdening correctness verification with a worst-case time complexity of $O(n!)$ to compute the sequential histories for the possible interleavings of $n$ concurrent method calls.
Some correctness verification tools have provided optimizations~\cite{vechev2009experience,ou2017checking} integrated into model checkers that accept user annotated linearization points to reduce the search space of possible sequential histories to $O(n)$ time in addition to the $O(p \cdot d \cdot r)$ time to perform model checking with dynamic partial-order reductions, where $p$ is the number of processes, $d$ is the maximum size of the search stack, and $r$ is the number of transitions explored~\cite{flanagan2005dynamic}.
However, this optimization technique for correctness verification is only effective if all methods have fixed linearization points.

This paper proposes \textit{Quantifiability}, a new definition of concurrent correctness that does not require reference to a sequential history.
Freeing analysis from this historical artifact of the era of single threaded computation and establishing a purely concurrent correctness condition is the goal of this paper.  
Quantifiability eliminates the necessity of demonstrating equivalence to sequential histories by evaluating correctness of a concurrent history based solely on the outcome of the method calls.
Like other conditions it does require atomicity of method calls and enables modular analysis with its compositional properties.  

Quantifiability supports separation of concerns. Principles of correctness are not mixed with real-time ordering or data structure semantics.  These are modifiers or constraints on the system.     
The conservation of method calls enables concurrent histories to be represented in vector space.  
Although it is not possible to foresee all uses of the vector space model, this paper will demonstrate the use of linear algebra to efficiently verify concurrent histories as quantifiable.

The following presentation of quantifiability and its proposed verification technique draws unabashedly on the work of Herlihy~\cite{Herlihy2011-yj}, who shaped the way a generation thinks about concurrency.
This paper repositions concurrent correctness in a way that overcomes the inherent limitations associated with defining correctness through equivalence to sequential histories.  Contributions to the field are:
\begin{enumerate}

\item We propose quantifiability as a concurrent correctness condition and illustrate its benefits over other correctness conditions.
\item We show that quantifiability is compositional and non-blocking.
\item We introduce linear algebra as a formal tool for reasoning about concurrent systems.
\item We present a verification algorithm for quantifiability 
with a significant improvement in time complexity compared to
analyzing all possible sequential histories.
\item A quantifiably correct concurrent stack and queue are implemented and shown to scale well. 
\end{enumerate}



\subsection{First Principles}
\label{SubSection:FirstPrinciples}
Among principles that describe concurrent system behavior, \textit{first principles} 
define what things happen while \textit{secondary principles} such as timing and order are modifiers on them. 
The conditions defined in secondary principles do not make sense without the first, but the reverse is not the case \cite{Descartes1903-wy}.
This view accords with intuition: Tardiness to a meeting is secondary to there being a meeting at all.  In a horse race, first principles define that the jockeys and horses are the same ones who started; only then does the order of finish makes sense.  The intuition that the events themselves are more important than their order, and that conservation is a prerequisite for ordering will be motivated with the following examples.



\def\pt{1.90}
\def\pc{1.75}
\def\pb{1.60}

\def\qt{1.15}
\def\qc{1.0}
\def\qb{0.85}

\def\rt{0.40}
\def\rc{0.25}
\def\rb{0.10}

\begin{wrapfigure}{r}{0.55\textwidth}
\centering
\begin{tikzpicture}[draw=black, scale=1, transform shape]

\node[align=left] at (0.5,\pc) {P0};
\node[align=left] at (0.5,\qc) {P1};

\draw [thick] (1.0,\pc) -- (3.0,\pc);
\draw (1.0,\pb) -- (1.0,\pt);
\node[align=center, above] at (2.0,\pb+0.1) {$push^x_0(7)$};
\draw (3.0,\pb) -- (3.0,\pt);

\draw [thick] (5.5,\pc) -- (7.5,\pc);
\draw (5.5,\pb) -- (5.5,\pt);
\node[align=center, above] at (6.5,\pb+0.1) {$pop^x_0(7)$};
\draw (7.5,\pb) -- (7.5,\pt);

\draw [thick] (3.25,\qc) -- (5.25,\qc);
\draw (3.25,\qb) -- (3.25,\qt);
\node[align=center, above] at (4.25,\qb+0.1) {$push^x_1(8)$};
\draw (5.25,\qb) -- (5.25,\qt);

\node[align=center, above] at (4.0,-0.5) {\textbf{History H1:} Sequentially consistent and ``almost'' \\ linearizable.}; 
\end{tikzpicture}
\end{wrapfigure}

A concurrent history $H$ defines the events in a system of concurrent processes and objects.
An object subhistory, $H|O$, of a history $H$ is the subsequence of all events in $H$ that are invoked on object $O$~\cite{herlihy1990linearizability}.
Consider history $H1$ on object $x$ with Last-In-First-Out (LIFO) semantics.  
The notation follows the convention $m^{o}_{p}$($i$), where $m$ is the method, $o$ is the object, $p$ is the process, and $i$ is the item to be input or output.
$H1$ is serializable and also sequentially consistent.  $H1$ is not linearizable, but it would be if the interval of $push^x_0(7)$ were slightly extended to overlap $push^x_1(8)$ or a similar adjustment were made relative to $pop^x_0(7)$.  Linearizability requires determining ``happens before'' relationships among all method calls to project them onto a sequential timeline.  Doing this with a shared clock timing the invocation and response of each method call is not feasible \cite{Sheehy2015-uh}.  
What is available, given some inter-process communication, is a logical clock \cite{Lamport1978-ip}.
Linearizability is sometimes ``relaxed'', creating loopholes to enable performance gains.  Without timing changes, $H1$ is linearizable using $k$-LIFO semantics where $k >= 2$ \cite{Shavit2015-ce}.  

\begin{wrapfigure}{r}{0.55\textwidth}
\centering
\begin{tikzpicture}[draw=black, scale=1, transform shape]

\node[align=left] at (0.5,\pc) {P0};
\node[align=left] at (0.5,\qc) {P1};

\draw [thick] (1.0,\pc) -- (3.0,\pc);
\draw (1.0,\pb) -- (1.0,\pt);
\node[align=center, above] at (2.0,\pb+0.1) {$push^x_0(7)$};
\draw (3.0,\pb) -- (3.0,\pt);

\draw [thick] (5.5,\pc) -- (7.5,\pc);
\draw (5.5,\pb) -- (5.5,\pt);
\node[align=center, above] at (6.5,\pb+0.1) {$pop^x_0(3)$};
\draw (7.5,\pb) -- (7.5,\pt);

\draw [thick] (3.25,\qc) -- (5.25,\qc);
\draw (3.25,\qb) -- (3.25,\qt);
\node[align=center, above] at (4.25,\qb+0.1) {$push^x_1(8)$};
\draw (5.25,\qb) -- (5.25,\qt);

\node[align=center, above] at (4.0,-0.5) {\textbf{History H2:} Not serializable because calls are not \\ conserved.}; 

\end{tikzpicture}
\end{wrapfigure}

Consider history $H2$ on the same object $x$.  $H2$ is not serializable, not sequentially consistent, not linearizable, and no changes in timing will allow $H2$ to meet any of these conditions.  Also, there is no practical relaxation of semantics that accepts $H2$.  There is an essential difference in the correctness of $H1$ and $H2$.  What happened in history $H1$ is intuitively acceptable, given some adjustments to when (timing) and how (relaxed semantics) it happened.  What happened in history $H2$ is impossible, as it creates the return value 3 from nothing.  As in the equestrian example, item 3 is not one of the starting horses.  
The method calls on object $x$ are not \textit{conserved}.  
A correctness condition that captures the difference between $H1$ and $H2$ allows separating the concerns of what happened and when according to the (possibly relaxed) semantics.  

\begin{wrapfigure}{r}{0.55\textwidth}
\centering
\begin{tikzpicture}[draw=black, scale=1, transform shape]

\node[align=left] at (0.5,\pc) {P0};
\node[align=left] at (0.5,\qc) {P1};

\draw [thick] (1.0,\pc) -- (3.5,\pc);
\draw (1.0,\pb) -- (1.0,\pt);
\node[align=center, above] at (2.25,\pb+0.1) {$pop^x_0(7)$};
\draw (3.5,\pb) -- (3.5,\pt);

\draw [thick] (3.75,\pc) -- (6.0,\pc);
\draw (3.75,\pb) -- (3.75,\pt);
\node[align=center, above] at (4.75,\pb+0.1) {$push^y_0(8)$};
\draw (6.0,\pb) -- (6.0,\pt);

\draw [thick] (2.0,\qc) -- (4.25,\qc);
\draw (2.0,\qb) -- (2.0,\qt);
\node[align=center, above] at (3.0,\qb+0.1) {$pop^y_1(8)$};
\draw (4.25,\qb) -- (4.25,\qt);

\draw [thick] (4.5,\qc) -- (6.75,\qc);
\draw (4.5,\qb) -- (4.5,\qt);
\node[align=center, above] at (5.5,\qb+0.1) {$push^x_1(7)$};
\draw (6.75,\qb) -- (6.75,\qt);

\node[align=center, above] at (4.0,-0.5) {\textbf{History H3:} Serializable, not sequentially consistent.};

\end{tikzpicture}
\end{wrapfigure}

History $H3$ has two objects $x$ and $y$. 
The projections $H3|x$ and $H3|y$ are serializable.  The combined history $H3$ is serializable.  
Projections $H3|x$ and $H3|y$ are also sequentially consistent.  However, their composition into $H3$ is not sequentially consistent.  Sequential consistency is not compositional~\cite{Lamport1978-ip}.  Projection $H3|x$ is not linearizable, therefore $H3$ is also not linearizable. 


A method is \textit{total} if it is defined for every object state; otherwise it is \textit{partial}.
\textit{Conditional semantics} are semantics that enable a partial method to return null upon reaching an undefined object state.
History $H4$ is the same as $H3$ with the exception of introducing conditional semantics for $pop$, making explicit a common relaxation of how a stack works.  With the conditional $pop$ $H4$ is sequentially consistent, yielding multiple correct orderings and end states. 

\begin{wrapfigure}{r}{0.55\textwidth}
\centering
\begin{tikzpicture}[draw=black, scale=1, transform shape]

\node[align=left] at (0.5,\pc) {P0};
\node[align=left] at (0.5,\qc) {P1};

\draw [thick] (1.0,\pc) -- (3.5,\pc);
\draw (1.0,\pb) -- (1.0,\pt);
\node[align=center, above] at (2.25,\pb+0.1) {$pop^y_0(null\lor8)$};
\draw (3.5,\pb) -- (3.5,\pt);

\draw [thick] (3.75,\pc) -- (6.0,\pc);
\draw (3.75,\pb) -- (3.75,\pt);
\node[align=center, above] at (4.9,\pb+0.1) {$push^x_0(7)$};
\draw (6.0,\pb) -- (6.0,\pt);

\draw [thick] (2.0,\qc) -- (4.25,\qc);
\draw (2.0,\qb) -- (2.0,\qt);
\node[align=center, above] at (3.0,\qb+0.1) {$pop^x_1(null\lor7)$};
\draw (4.25,\qb) -- (4.25,\qt);

\draw [thick] (4.5,\qc) -- (6.75,\qc);
\draw (4.5,\qb) -- (4.5,\qt);
\node[align=center, above] at (5.65,\qb+0.1) {$push^y_1(8)$};
\draw (6.75,\qb) -- (6.75,\qt);

\node[align=center, above] at (4.0,-.5) {\textbf{History H4:} Conditional $pop$ makes $H3$ sequentially \\ consistent.};

\end{tikzpicture}
\vspace{-2em}
\end{wrapfigure}

But conditional $pop$ is not consistent with early formal definitions of the stack abstract data type where $pop$ on an empty stack threw an error \cite{Guttag1976-nq}
 or a signal \cite{Liskov1994-rn}. 
The semantics of these exceptions were taken seriously \cite{Gogolla1984-sx}.
Invariants prevented exceptions, and there was ``no guarantee'' of the result if they were violated \cite{Zaremski1995-vm}. 
The conditional $pop$ can be traced to the literature on performance \cite{Badrinath1987-oe}, 
where the requirement to handle errors and check invariants is ignored.  Conditional semantics remain prevalent in recent work, extending to proofs of correctness allowing two different linearization points with respect to the same method calls \cite{Amit2007-ti}. 

\begin{wrapfigure}{r}{0.55\textwidth}
\centering
\begin{tikzpicture}[draw=black, scale=1, transform shape]

\node[align=left] at (0.5,\pc) {P0};
\node[align=left] at (0.5,\qc) {P1};
\node[align=left] at (0.5,\rc) {P2};

\draw [thick] (1.0,\pc) -- (3.0,\pc);
\draw (1.0,\pb) -- (1.0,\pt);
\node[align=center, above] at (2.0,\pb+0.1) {$pop^z_0(null\lor1$)};
\draw (3.0,\pb) -- (3.0,\pt);

\draw [thick] (5.5,\pc) -- (7.5,\pc);
\draw (5.5,\pb) -- (5.5,\pt);
\node[align=center, above] at (6.5,\pb+0.1) {$pop^z_0(null\lor1)$};
\draw (7.5,\pb) -- (7.5,\pt);

\draw [thick] (3.25,\qc) -- (5.25,\qc);
\draw (3.25,\qb) -- (3.25,\qt);
\node[align=center, above] at (4.25,\qb+0.1) {$pop^z_1(null\lor1)$};
\draw (5.25,\qb) -- (5.25,\qt);

\draw [thick] (3.25,\rc) -- (5.25,\rc);
\draw (3.25,\rb) -- (3.25,\rt);
\node[align=center, above] at (4.25,\rb+0.1) {$push^z_2(1)$};
\draw (5.25,\rb) -- (5.25,\rt);

\node[align=center, above] at (4.0,-0.75) {\textbf{History H5:} $P0$ keeps trying to $pop$.};

\end{tikzpicture}
\vspace{-1em}
\end{wrapfigure}

History $H5$ illustrates another problem with the conditional $pop$.  Consider a stack that allocates a scarce resource.  $P0$ issued a request before $P1$ and repeats it soon after, but gets nothing.  $H5$ might be repeated many times with $P1$ and $P2$ exchanging the item.   The scheduler allocates \textit{twice} as many requests per cycle to $P0$ as either $P1$ or $P2$, so why is there starvation? It is because conditional $pop$ is \textit{inherently unfair}.  
Although $P0$ is not blocked in the sense of waiting to complete the method \cite{Herlihy1991-jf}, conditional $pop$ causes it to repeatedly lose its place in the ordering of requests.  It might be called ``progress without progress.'' 
Recall that serializability causes \textit{inherent blocking} and this was used to show the benefits of linearizability \cite{herlihy1990linearizability}.  
A new correctness condition should be free of \textit{inherent unfairness} as well. 

\subsection{Desirable Properties for Quantifiability}

Multicore programming is considered an art \cite{Herlihy2011-yj} and is generally regarded as difficult.  
The projection of a concurrent history onto a sequential timeline provided an abstraction to understand systems with a few processes and led to definitions of their correctness.  
Linearizability, being compositional, ensures that reasoning about system correctness is linear with respect to the number of objects. 
But verifying linearizability for an individual object is not linear in the number of method calls.  Systems today may have thousands of processes and millions of method calls, far beyond the capacity of current verification tools. 
The move from art to an engineering discipline requires a new correctness condition with the following desirable properties:


\begin{itemize} 
\item \textbf{Conservation} What happens is what the methods did.  Return values cannot be pulled from thin air (history $H2$). Method calls cannot disappear into thin air (history $H5$).  
\item \textbf{Measurable} Method calls have a certain and measurable impact on system state, not sometimes null (history $H4$).  
\item \textbf{Compositional} Demonstrably correct objects and their methods may be combined into demonstrably correct systems (history $H3$).
\item \textbf{Unconstrained by Timing} Correctness based on timing limits opportunities for performance gains and incurs verification overhead when comparing the method call invocation and response times to determine which method occurs first in the history (history $H1$).
\item \textbf{Lock-free, Wait-free, Deadlock-Free, Starvation-Free} Design of the correctness condition should not limit or prevent system progress. 
\end{itemize}

\section{Definition}
Quantifiability is concerned with the impact of method calls on the system as opposed to the projection of a method call onto a sequential timeline.
The \textit{configuration} of an arbitrary element comprises the values stored in that element.
The \textit{system state} is the configuration of all the objects that represents the outcome of the method calls by the processes.

\subsection{Principles of Quantifiability}
\label{Section:Principles}

A familiar way to introduce a correctness condition is to state principles that must be followed for it to be true \cite{Herlihy2011-yj}.  Quantifiability embodies two principles.

\begin{principle}
\textbf{Method conservation:  Method calls are first class objects in the system that must succeed, remain pending, or be explicitly cancelled.}
\end{principle}
Principle~1 requires that every instance of a process calling a method, including any arguments and return values specified, is part of the system state.  Method calls are not ephemeral requests, but ``first class'' \cite{Abelson_undated-vk,Strachey1967-qn} members of the system.  All remain pending until they succeed or are explicitly cancelled. Method calls may not be cancelled implicitly as in the conditional $pop$. 
Actions expected from the method by the calling process must be completed.  
This includes returning values (if any) and making the expected change to the state of the concurrent object on which the method is defined. 

Duplicate method calls can be handled in several ways conforming to Principle~1.  A duplicate call might be considered a syntactic shorthand for ``cancel the first operation and resubmit'', or it could throw a run time error to have identical calls on the same address.   Alternatively an index could be added to the method call to uniquely identify it such that it can be distinguished from other identical calls. 

\begin{principle}
\textbf{Method quantifiability: Method calls have a measurable impact on the system state. }
\end{principle}
Principle~2 requires that every method call owns a scalar value, or metric, that reflects its impact on system state.  There is some total function that computes this metric for each instance of a method call.  
Building on Principle~1 that conserves the method calls themselves, Principle~2 requires that a value can be assigned to the method call.  All method calls ``count.''  The conservation of method calls along with the measurement of their impact on system state is what gives quantifiability its name.  Values are assigned as part of the correctness analysis. As with concepts such as linearization points, these values are not necessarily part of the data structure, but are artifacts for proving correctness.   

The assignment of values to the method calls may be straightforward.  
For the stack abstract data type, $push$ is set to +1 and $pop$ is set to -1.  
Sometimes value assignments are subtle:  Principle~1 requires that reads are first class members of the system state, so performing them is a state change.  Reads will have a small but measurable impact, unlike reads in other system models that are considered to have no effect on system state.  Probes such as a $contains$ method on a set data type also have a value.

An informal definition of quantifiability is now presented to provide the reader with intuition regarding the meaning of quantifiability. The informal definition is followed by a description of the system model in Section~\ref{SubSection:SystemModel} and a formal definition of quantifiability in Section~\ref{SubSection:FormalDefinition}.

\begin{definition}
\label{Definition:Informal}
(Informal). A history $H$ is \textit{quantifiable} if each method call in $H$ succeeds, remains pending, or is explicitly canceled, and the effect of each method call appears to execute atomically and in isolation. Furthermore the effect of every completed method call makes a measurable contribution to the system state.
\end{definition}

It may appear as though quantifiability is equivalent to serializability.  However, quantifiability does not permit method calls to return null upon reaching an undefined object state while serializability does permit this behavior.  Quantifiability measures the outcome of every method call by virtue of its completion.
This subtle difference directly impacts the complexity of analysis, and can lead to throughput increases for quantifiability when designing a data structure.  Quantifiable implementations learn from relaxed semantics to ``save'' a method call in an accumulator rather than discarding it due to a data structure configuration where the method call could not be immediately fulfilled. Quantifiable data structure design is discussed in further details in Section~\ref{Section:Implementation}. 

\subsection{System Model}
\label{SubSection:SystemModel}
A concurrent system is defined here as a finite set of methods, processes, objects and items.  Methods define \textit{what} happens in the system.  Methods are defined on a class of objects but affect only the instances on which they are called.   Processes are the actors \textit{who} call the methods, either in a predetermined sequence or asynchronously driven by events.  Objects are encapsulated containers of concurrent system state.  Objects are \textit{where} things happen.  Items are data passed as arguments to and returned as a result from completed method calls on the concurrent objects.  
Method invariants and semantics place constraints such as order, defining \textit{how} things happen. Quantifiable concurrent histories are serializable so every method call takes effect during the interval spanning the history, meaning that \textit{when} method calls occur may be reordered to achieve correctness. 

A \textit{method call} is a pair consisting of an invocation and next matching response~\cite{herlihy1990linearizability}. 
An invocation is \textit{pending} in history $H$ if no matching response follows the invocation~\cite{herlihy1990linearizability}.
Each method call is specified by a tuple (Method, Process, Object, Item).
A method call with input or output that comprises multiple items can be represented as a single structured item.
An execution of a concurrent system is modeled by a \textit{concurrent history} (or simply \textit{history}), which is a multiset of method calls~\cite{herlihy1990linearizability}. 
Although the domain of possible methods, processes, objects, and items is infinite, actual concurrent histories are a small subset of these.

It is not unusual when discussing concurrent histories to speak of, ``the projection of a history onto objects.'' However the focus from there has always been on building sequential histories, so the literature does not extend this language to bring the analysis of concurrent histories formally into the realm of linear algebra.  
Quantifiability facilitates this extension, with fruitful consequences.

\subsection{Vector Space}
\label{SubSection:VectorSpace}
A \textit{vector} is an ordered $n$-tuple of numbers, where $n$ is an arbitrary positive integer.
A \textit{column vector} is a vector with a row-by-column dimension of $n$ by 1.
In this section our system model is mapped to a vector space over the field of real numbers $\mathbb{R}$.  
The system model is isomorphic to vector spaces over $\mathbb{R}$ described in matrix linear algebra textbooks~\cite{Beezer2008-eq}.
From this foundation, analysis of concurrent histories can proceed using the tools of linear algebra.

The components of the system model are represented as dimensions in the vector space, written in the order Methods (M), Processes (P), Objects(O) and Items (I).  
The \textit{basis vector} of the history is the Cartesian product of the dimensions $M \times P \times O \times I$.
Each unique configuration of the four components defines a basis for a vector space over the real numbers.
The spaces thus defined are of finite dimension.
In this model, \textit{orthogonal} means that dimensions are independent.
An \textit{orthogonal basis} is a basis whose vectors are orthogonal.
It is necessary to define an orthogonal basis because each non-interacting method call with distinct objects, processes, and items is an independent occurrence from every other combination.  

A history is represented by a vector with \textit{elements} corresponding to the basis vector uniquely defined by the concurrent system. 
Principle~2 states that each method call has a value. These are the values represented in the elements of the history vector.
On a LIFO stack, $push$ and pop methods are inverses of each other.  An important difference is that $push$ is completed without dependencies in an unbounded stack, whereas $pop$ returns the next available item, which may not arrive for some time.  A history that shows a completed $pop$ must account for the source of the item being returned, either in the initial state or in the history itself.  The discussion of history $H2$ in Section~\ref{SubSection:FirstPrinciples} showed this is common to the analysis of serializability, sequential consistency, and linearizability.

Concurrent histories can be written as column vectors whose elements quantify the occurrences of each unique method call, that is, a vector of coordinates over $\mathbb{R}$ acting on a basis constructed of the Methods, Processes, Objects and Items involved. History $H1$ in Section~\ref{SubSection:FirstPrinciples} can be written: 
   \begin{align} 
    H1  &= \begin{pmatrix}
           1 \\
           0 \\
           0 \\
           1 \\
           -1 \\
           0 \\
           0 \\
           0 
         \end{pmatrix}
         & basis = 
         & \begin{bmatrix}
         push, \hspace{3pt}  P0, \hspace{3pt}  x, \hspace{3pt}  7 \\
         push, \hspace{3pt} P0, \hspace{3pt} x, \hspace{3pt} 8 \\
         push, \hspace{3pt} P1, \hspace{3pt} x, \hspace{3pt} 7 \\
         push, \hspace{3pt} P1, \hspace{3pt}  x, \hspace{3pt} 8 \\
         pop, \hspace{3pt} P0, \hspace{3pt} x, \hspace{3pt} 7 \\
         pop, \hspace{3pt} P0, \hspace{3pt} x, \hspace{3pt} 8 \\
         pop, \hspace{3pt} P1, \hspace{3pt} x, \hspace{3pt} 7 \\
         pop, \hspace{3pt} P1, \hspace{3pt} x, \hspace{3pt} 8 
         \end{bmatrix}
  \end{align}

The history vector has the potential to be large considering that the basis vector is defined according to the Cartesian product of the dimensions $M \times P \times O \times I$.
However, the history vector will likely be sparse unless all possible combinations in which the items passed to the methods invoked by the processes on the objects occur in the history.
If the history vector is sparse, then the algorithms analyzing the history vector can be compressed such that the non-zero values are stored in a compact vector and for each element in the compact vector, the corresponding index in the original history vector is stored in an auxiliary vector~\cite{williams2007optimization}.
With a dense  history vector where the majority of the elements in the history vector represent a method that actually occurs in the history, the complexity remains contained as standard linear algebra can be applied in the analysis of the history vector. 

\subsection{Formal Definition}
\label{SubSection:FormalDefinition}
The formal definition of quantifiability is described using terminology from mathematics, set theory, and linear algebra in addition to formalisms presented by Herlihy et al.~\cite{herlihy1990linearizability} to describe concurrent systems. 
Methods are classified according to the following convention.
A \textit{producer} is a method that generates an item to be placed in a data structure.
A \textit{consumer} is a method that removes an item from the data structure.
A \textit{reader} is a method that reads an item from the data structure.
A \textit{writer} is a method that writes to an existing item in the data structure.

A \textit{method call set} is an unordered set of method calls in a history.
A \textit{producer set} is a subset of the method call set that contains all its producer method calls.
A \textit{consumer set} is a subset of the method call set that contains all its consumer method calls.
A \textit{writer set} is a subset of the method call set that contains all its writer method calls.
A \textit{reader set} is a subset of the method call set that contains all its reader method calls.
Since a method call is a pair consisting of an invocation and next matching response, no method in the method call set will be pending.
Quantifiability does not discard the pending method calls from the system state nor does it place any constraints on their behavior while they remain pending.

The history vector described in Section~\ref{SubSection:VectorSpace} is transformed such that the method calls in a history are represented as a set of vectors.
To maintain consistency in defining quantifiability as a property over a set of vectors, all method calls are represented as column vectors.
Each position of the vector represents a unique combination of process, object, and input/output parameters that are encountered by the system, where this representation is uniform among the set of vectors.
Given a system that encounters $n$ unique combinations of process, object, and input/output parameters, each method call is represented by an $n$-dimensional column vector.

The value assignment scheme is chosen such that the changes to the system state by the method calls are ``quantified.''
For all cases, let $\vec{V_i}$ be a column vector that represents method call $op_i$ in a concurrent history. Each element of $\vec{V_i}$ is initialized to 0.
\\ \textbf{Case} ($op_i \in producer~set$): Let $j$ be the position in $\vec{V_i}$ representing the combination of input parameters passed to $op_i$ and the object that $op_i$ operates on. Then $\vec{V_i}[j] = 1$.
\\ \textbf{Case} ($op_i \in consumer~set$): Let $j$ be the position in $\vec{V_i}$ representing the combination of output parameters returned by $op_i$ and the object that $op_i$ operates on. Then $\vec{V_i}[j] = -1$.
\\ \textbf{Case} ($op_i \in writer~set$): Let $j$ be the position in $\vec{V_i}$ representing the combination of input parameters passed to $op_i$ and the object that $op_i$ operates on. Let $k$ be the position in $\vec{V_i}$ representing the combination of input parameters that correspond to the previous value held by the object that is overwritten by $op_i$. 
If $j\ne k$, then $\vec{V_i}[j] = 1$ and $\vec{V_i}[k] = -1$, else $\vec{V_i}[j] = 0$.
\\ \textbf{Case} ($op_i \in reader~set$): Let $j$ be the position in $\vec{V_i}$ representing the combination of output parameters returned by $op_i$ and the object that $op_i$ operates on. Let $\vec{I}$ be a column vector representing a \textit{read index} for each combination of output parameters returned by a reader method, where each element of $\vec{I}$ is initialized to 0. Then $\vec{I}[j] = \vec{I}[j]+1$, $\vec{V_i}[j] = - \left(~\frac{1}{2}~ \right)^{\vec{I}[j]}$.

In the case for $op_i \in producer~set$, setting $\vec{V_i}[j]$ to 1, where $j$ denotes the position representing the combination of input parameters passed to $op_i$ and the object that $op_i$ operates on, captures the entrance of the new item into the system.
In the case for $op_i \in consumer~set$, setting $\vec{V_i}[j]$ to -1, where $j$ denotes the position representing the combination of output parameters returned by $op_i$ and the object that $op_i$ operates on, captures the removal of the item from the system.

In the case for $op_i \in writer~set$, an item that exists in the system is overwritten with the input parameters passed to $op_i$.
A writer method accomplishes two different things in one atomic step: 1) it consumes the previous value held by the item and 2) it produces a new value for the item. 
This state change is represented by setting the position in $\vec{V_i}$ representing the corresponding object and combination of the input parameters to be written to an item to 1 and by setting the position in $\vec{V_i}$ representing the corresponding object and combination of input parameters corresponding to the previous value held by the item to -1. 
If the input parameters corresponding to the previous value held by the item are identical to the input parameters to be written to an item, then the position in $\vec{V_i}$ representing the combination of input parameters to be written to the item is set to zero since no change has been made to the system state. 

To separate the two actions performed by the writer method into separate vectors, linear algebra can be applied to $\vec{V_i}$ in the following way.
Let $\vec{V_i}\_prod$ be the vector representing the producer effect of the writer method. 
Then $\vec{V_i}\_prod = \lfloor \left( \vec{V_i} + \vec{1} \right) \cdot ~\frac{1}{2}~\rfloor$.
Let $\vec{V_i}\_cons$ be the vector representing the consumer effect of the writer method. 
Then $\vec{V_i}\_cons = \lceil \left( \vec{V_i} + \vec{-1} \right) \cdot ~\frac{1}{2}~\rceil$.

The addition of $\vec{1}$ to $\vec{V_i}$ when computing $\vec{V_i}\_prod$ will cause all elements with a -1 value to become 0, and the multiplication of the scalar $~\frac{1}{2}~$ will revert all elements with a value of 2 back to 1.
The floor function is applied to revert elements with a value of $~\frac{1}{2}~$ back to 0.
A similar reasoning can be applied to the computation of $\vec{V_i}\_cons$.


In the case for $op_i \in reader~set$, $op_i$ returns the state of an item that exists in the system as output parameters. 
Multiple reads are permitted for an item with the constraint that the output parameters returned by a reader reflect a state of the item that was initialized by a producer method or updated by a writer method.
This behavior is accounted for by setting $\vec{V_i}[j] = - \left(~\frac{1}{2}~ \right)^{\vec{I}[j]}$, where $j$ denotes the position representing the corresponding object and the combination of output parameters returned by $op_i$ and $\vec{I}[j]$ represents the read count for the combination of output parameters returned by $op_i$.
The series $~\frac{1}{2} + \frac{1}{4} + \frac{1}{8} ...$ is a geometric series, where $\sum\limits_{n=1}^\infty \left(~\frac{1}{2}~ \right)^{n}=1$.
Since a concurrent history will always contain a finite number of methods, the elements of the resulting vector obtained by taking the sum of the reader method vectors will have a value in the range of $\left(-1, -~\frac{1}{2}~\right]$.
If this vector is further added with the sum of the producer method vectors and the writer method vectors $\vec{V_i}\_prod$, the elements of the resulting vector will always be greater than zero given that the output of all reader methods corresponds with a value that was either initialized by a producer method or updated by a writer method. 

\begin{definition}
\label{Definition:Quantifiability}
Let $\vec{P}$ be the vector obtained by applying vector addition to the set of vectors for the producer set of history $H$. 
Let $\vec{W}\_prod$ be the vector obtained by applying vector addition to the set of vectors $\vec{V_i}\_prod$ for the writer set of history $H$. 
Let $\vec{W}\_cons$ be the vector obtained by applying vector addition to the set of vectors $\vec{V_i}\_cons$ for the writer set of history $H$. 
Let $\vec{R}$ be the vector obtained by applying vector addition to the set of vectors for the reader set of history $H$.
Let $\vec{C}$ be the vector obtained by applying vector addition to the set of vectors for the consumer set of history $H$. 
Let $\vec{H}$ be a vector with each element initialized to 0.
\\ \textbf{For each} element $i$,
\\ \textbf{if} $\left( \vec{P}[i] + \vec{W}\_prod[i] \right) \ge 1$ \textbf{then}
$\vec{H}[i]$ = $\lceil \vec{P}[i] + \vec{W}\_prod[i] + \vec{R}[i] \rceil$  + $\vec{W}\_cons[i]$ + $\vec{C}[i]$ 
\\ \textbf{else} 
$\vec{H}[i]$ = $\vec{P}[i]$ + $\vec{W}\_prod[i]$ + $\vec{W}\_cons[i]$ + $\vec{R}[i]$ + $\vec{C}[i]$.
\vspace{3pt}
\\ History $H$ is \textit{quantifiable} if for each element $i$, $\vec{H}[i] \ge 0$.
\end{definition}

Informally, if all vectors representing the methods in the method call set of history $H$ are added together, the value of each element should be greater than or equal to zero.
This property indicates that the net effect of all methods invoked upon the system is compliant with the conservation requirement that no non-existent items have been removed, updated, or read from the system.
The values of the vectors for the reader method are assigned such that each element in the sum of the reader method vectors is always greater than -1.
As long as the output of the reader method is equivalent to the input of a producer method or writer method, then the reader method has observed a state of the system that corresponds to the occurrence of a producer method or writer method.
The ceiling function is applied to $\vec{P}[i] + \vec{W}\_prod[i] + \vec{R}[i]$ if $\left( \vec{P}[i] + \vec{W}\_prod[i] \right) \ge 1$ which yields a value that is also $\ge 1$.
Once the reader method vectors have been added appropriately, the remaining method call vectors can be directly added to compute $\vec{H}$ for history $H$.

If any element of $\vec{H}$ is less than zero, then a consume action has been applied to either an item that does not exist in the system state (the item was previously consumed) or an item that never existed in the system state (the item was never produced or written), which is not quantifiable due to a violation of Principle~1.

A notable difference between defining correctness as properties over a set of vectors and defining correctness as properties of sequential histories is the growth rate of a set of vectors versus sequential histories when the number of methods called in a history is increased.
The size of a set of vectors grows at the rate of $O(n)$ with respect to $n$ methods called in a history.
The number of sequential histories grows at the rate of $O(n!)$ with respect to $n$ methods called in a history.
This leads to significant time cost savings when verifying a correctness condition defined as properties over a set of vectors since analysis of $n$ $c$-dimensional vectors using linear algebra can be performed in $O(n + c)$ time ($n$ time to assign values and compute separate vectors that each represent a sum of the producer, consumer, writer, and reader method call vectors, and $c$ time to add the elements of the vectors representing the sum of the producer, consumer, writer, and reader method call vectors).

\section{Proving that a Concurrent History is Quantifiably Correct with Tensor Representation}
\label{MatrixAnalysis}
A tensor is the higher-dimension generalization of the matrix. Just as matrices are composed of rows and columns, tensors are composed of \textit{fibers}, obtained by fixing all indices of the tensor except for one. The \textit{order} of a tensor is the number of dimensions. 

When proving that a concurrent history is quantifiable, it is useful to reshape the history vector into a higher-order tensor. Any tensor of order $d$, including order-1 tensors (vectors), can be reshaped into a tensor of higher order $m$, where $m > d$. Such a reshaping is known as the \textit{tensorization} or \textit{folding} of the original tensor. There are many tensorization techniques for vectors based on the desired structure of the resultant tensor \cite{debals2015stochastic}. Here, we use the segmentation technique to map consecutive segments of the vector to the tensor. In particular, we follow the method of Grasedyck \cite{grasedyck2010polynomial}; given a vector $x \in \mathbb{R}^{I_1 \cdots I_N}$, we define the bijection
\[
\mu~:~\mathbb{R}^{I_{1} \cdots I_{N}} \mapsto \mathbb{R}^{I_{1} \times \cdots \times I_{N}}
\]
for all indices $i_d \in \{1,\ldots,I_d\}$, $d=1,\ldots,N$, by
\[
(\mu(x))_{i_1,\ldots,i_{N}} \mapsto (x)_{j},
\]
where
\[
j = i_1 + \sum\limits_{k=2}^{N}(i_k - 1)\prod\limits_{m=1}^{k-1}I_m.
\]
This mapping maps each segment of $I_1$ consecutive vector elements of $x$ to each mode-1 fiber of the tensor $\mu(x)$.

Let $H \in \mathbb{R}^{I \cdot O \cdot P \cdot M}$ be the concurrent history vector for a given system. The general concurrent history tensor $\mathcal{H}$ is then obtained by $\mathcal{H} = \mu(H) \in \mathbb{R}^{I \times O \times P \times M}$. Because the value assignment scheme provides scalar quantities to the method calls, it can be useful to eliminate the inside dimension $M$ by summation, yielding a 3-way tensor $\mathcal{H}_{iop} \in \mathbb{R}^{I \times O \times P}$ which is the net result of method calls for each process for every object-item pair.  The process dimension may further be eliminated by summation and the order-2 tensor (matrix) $\mathcal{H}_{io} \in \mathbb{R}^{I \times O}$ is the net result of all method calls for every object-item pair.  This matrix represents a quantifiable history if and only if all of the resulting elements are greater than or equal to zero.  

In summary, quantifiability can be determined by tensorizing the history vector into an order-4 tensor and summing along the method and process dimensions to flatten it into a matrix. If all the values in this matrix are non-negative then the history is quantifiable. Other properties can be shown. For example, summing the absolute values along the method and process dimensions creates a heatmap of the busy object-item pairs in the history.

\section{Proving that a Data Structure is Quantifiably Correct}
To prove that a concurrent data structure is quantifiability correct, it must be shown that each of the methods preserve atomicity (the method takes effect entirely or not at all), isolation (the method's effects are indivisible), and conservation (every method call either completes successfully, remains pending, or is explicitly cancelled).
There is a large body of research on automatic verification of atomicity for transactions or method calls in a concurrent history, including an atomic type system~\cite{flanagan2003type}, inference of operation dependencies~\cite{flanagan2008velodrome}, dynamic analysis tools~\cite{flanagan2004atomizer} based on Lipton's theory of reduction~\cite{lipton1975reduction}, and modular testing of client code~\cite{shacham2011testing}.
There are fewer techniques presented in literature for proving atomicity and isolation for a concurrent object. 
Such techniques include Lipton's theory of reduction~\cite{lipton1975reduction} for reasoning about sequences of statements that are indivisible, occurrence graphs that represent a single computation as a set of interdependent events~\cite{best1981formal}, Wing's methodology~\cite{wing1989verifying} for demonstrating that a concurrent object's behavior is equivalent to its sequential specification, and simulation mappings between the implementation and specification automata~\cite{chockler2005proving}.

Proving that a concurrent object is linearizable~\cite{Herlihy1990-ms} requires an abstraction function~$\mathcal{A}:REP \rightarrow ABS$ to be defined, where $ABS$ is an \textit{abstract type} (the type being implemented), $REP$ is a \textit{representation type} (the type used to implement $ABS$), and $\mathcal{A}$ is defined for the subset of $REP$ values that are legal representations of $ABS$.
An implementation~$\rho$ of an abstract operation~$\alpha$ is shown to be correct by proving that whenever~$\rho$ carries one legal $REP$ value~$r$ to another~$r$', $\alpha$ carries the abstract value from $\mathcal{A}(r)$ to $\mathcal{A}$($r$').

Since Lipton's approach ~\cite{lipton1975reduction} is focused on lock-based critical sections, occurrence graphs~\cite{best1981formal} do not model data structure semantics, and Wing's approach~\cite{wing1989verifying}, simulation mappings~\cite{chockler2005proving}, and formal proofs of linearizability require reference to sequential histories, they are not sufficient for proofs of quantifiability.
However, informal proofs of linearizability reason about program correctness by identifying a single instruction for each method in which the method call takes effect, referred to as a \textit{linearization point}.
Proving that a data structure is quantifiably correct can be performed in a similar fashion by defining a \textit{visibility point} for each method.
A visibility point is a single instruction for a method in which the entire effects of the method call become visible to other method calls.
Unlike a linearization point, a visibility point does not need to occur at some instant between a method call's invocation and response.

Establishing a visibility point for a method demonstrates that its effects preserve atomicity and isolation, but it still remains to be shown that the method call's effects are conserved.
A method call's effects are conserved if it returns successfully or its pending request is stored in the data structure and will be fulfilled by a future method call.
The proof for conservation of method calls requires demonstrating that 1) a method completes its operation on the successful code path and 2) a method's pending request is stored in the data structure on the unsuccessful code path.
Additionally, statements must be provided for each method that prove that its invocation is guaranteed to fulfill a corresponding pending request if one exists.

\section{Verification Algorithm}

Algorithm~\ref{alg:verification} presents the verification algorithm for quantifiability derived from the corresponding formal definition. 
Line~\ref{alg:verification}.\ref{l:MAX} is defined as a constant that is the total number of unique configurations comprising the process, object, and input/output that are encountered by the system.
The $P$ array tracks the running sum of the producer method vectors.
The $W\_prod$ array tracks the running sum of the new values written by the writer method vectors.
The $W\_cons$ array tracks the running sum of the previous values overwritten by the writer method vectors.
The $R$ array tracks the running sum of the reader method vectors.
The $C$ array tracks the running sum of the consumer method vectors.
The $I$ array tracks the running sum of the read index for the reader method vectors.
The $H$ array tracks the final sum of all method call vectors.
The \textsc{VerifyHistory} function accepts a set of method calls as an argument on line~\ref{alg:verification}.\ref{l:verify}.
The for-loop on line~\ref{alg:verification}.\ref{l:forloop1} iterates through the methods in the method call set and adds a value to the appropriate array according to the value assignment scheme discussed in Section~\ref{SubSection:FormalDefinition}.

The for-loop on line~\ref{alg:verification}.\ref{l:forloop2} iterates through each of the configurations and sums the method call vectors according to Definition~\ref{Definition:Quantifiability} to obtain the final vector $\vec{H}$.
If any element of $\vec{H}$ is less than zero or greater than one (line~\ref{alg:verification}.\ref{l:check}), then the history is not quantifiable.
Otherwise, if all elements of $\vec{H}$ are greater than or equal to zero, then the history is quantifiable.

\begin{algorithm}
\footnotesize
	\caption{Quantifiability Verification}
	\label{alg:verification}
	\begin{algorithmic}[1]
	    \State $\textbf{\#define}~MAX~constant$ \Comment{Total number of process/object/input/output configurations} \label{l:MAX}
	    \State $\textbf{int}~P[MAX]$, $W\_prod[MAX]$, $W\_cons[MAX]$, $R[MAX]$, $C[MAX]$, $I[MAX]$, $H[MAX]$
		\Function{VerifyHistory} {$\textbf{set}~methods$} \label{l:verify}
		\State $\textbf{set \textless Method \textgreater::iterator}~it$
		\For{$it=methods.begin();~it!=methods.end();~++it$} \label{l:forloop1}
		    \If {$it.type == Producer$} 
		        \State $\textbf{int}~j = \Call{ParamsToIndex}{it.object, it.input}$
		        \State $P[j] = P[j] + 1$ \label{l:producer}
		    \ElsIf{$it.type == Writer$}
		        \State $\textbf{int}~j = \Call{ParamsToIndex}{it.object, it.input}$
		        \State $\textbf{int}~k = \Call{ParamsToIndex}{it.object, it.prevVal}$
		        \State $W\_prod[j] = W\_prod[j] + 1$ \label{l:writerprod}
		        \State $W\_cons[k] = W\_cons[k] - 1$ \label{l:writercons}
		    \ElsIf{$it.type == Reader$}
		        \State $\textbf{int}~j = \Call{ParamsToIndex}{it.object, it.output}$
		        \State $I[j] = I[j] + 1$ \label{l:readcount}
		        \State $R[j] = R[j] - \left( ~\frac{1}{2}~ \right) ^{I[j]}$ \label{l:reader}
		    \ElsIf {$it.type == Consumer$} 
		        \State $\textbf{int}~j = \Call{ParamsToIndex}{it.object, it.output}$
		        \State $C[j] = C[j] - 1$ \label{l:consumer}
		    \EndIf
		\EndFor
		\For{$\textbf{int}~i = 0;~i<MAX;~i++$} \label{l:forloop2}
		    \If { $\left( P[i] + \vec{W}\_prod[i] \right) \ge 1$ }
		        \State $\vec{H}[i]$ = $\lceil \vec{P}[i] + \vec{W}\_prod[i] + \vec{R}[i] \rceil$ + $\vec{W}\_cons[i]$ + $\vec{C}[i]$
		    \Else
		        \State $\vec{H}[i]$ = $\vec{P}[i]$ + $\vec{W}\_prod[i]$ + $\vec{W}\_cons[i]$ + $\vec{R}[i]$ + $\vec{C}[i]$
		    \EndIf
		    \If {$\vec{H}[i] < 0 $} \label{l:check}
		        \State \Return \FALSE
		    \EndIf
		\EndFor
		\State \Return \TRUE
		\EndFunction
	\end{algorithmic}
\end{algorithm}

\subsection{Time Complexity of Verification Algorithm}
Let $n$ be the total number of methods in a history and let $c$ be the total number of configurations determined according to the input/output of each method and the object to be invoked on by the method. 
The for-loop on line~\ref{alg:verification}.\ref{l:forloop1} takes $O(n)$ time to iterate through all methods in the method call set.
The for-loop on line~\ref{alg:verification}.\ref{l:forloop2} takes $O(c)$ time to iterate through all possible configurations.
Let $i$ be the total number of input/output combinations and let $j$ be the total number of objects.
The total number of configurations is $i \cdot j$.
Therefore, the total time complexity of \textsc{VerifyHistory} is $O(n + i \cdot j)$.

\section{Properties of Quantifiability}
The system model presented in Section~\ref{SubSection:SystemModel} is mapped to a vector space.  We do not claim that the axioms of a vector space hold for all possible concurrent systems.  We do propose a mapping from most concurrent systems to the mathematical ideal of a vector space.  Concurrent systems fitting the model define a vector space and their histories are the vectors in that space. For concurrent systems fitting the model, properties of a vector space become axiomatic and have a variety of uses.

\subsection{Compositionality} 

To show compositionality, it must be shown that the composition of two quantifiable histories is quantifiable, and that the decomposition of histories, i.e. the projection of the history on any of its obejcts, is also a quantifiable history.  This is formally stated in the following theorem.

\begin{theorem}
History $H$ is quantifiable if and only if, for each object $x$, $H|x$ is quantifiable.
\end{theorem}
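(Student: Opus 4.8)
The plan is to exploit the fact that the value-assignment scheme of Section~\ref{SubSection:FormalDefinition} tags every coordinate of the history vector with a single object, so that the vector decomposes as an orthogonal direct sum indexed by objects. Concretely, every position $i$ of $\vec{H}$ corresponds to a configuration that fixes one object together with a combination of input/output parameters; hence each coordinate belongs to exactly one object. The projection $H|x$ retains precisely the method calls invoked on $x$, so it contributes only to the coordinates belonging to $x$ and leaves every other coordinate equal to $0$. Writing $\vec{H}_x$ for the vector produced by applying Definition~\ref{Definition:Quantifiability} to $H|x$, the crux is a \emph{locality lemma}: for every coordinate $i$ belonging to $x$ one has $\vec{H}[i] = \vec{H}_x[i]$. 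This holds because each case of the scheme (producer, consumer, writer, reader) selects its target position using only the object $op_i$ operates on together with the relevant parameters, so a method call on an object $y \ne x$ never writes into a coordinate belonging to $x$.

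Granting the locality lemma, both directions are immediate from the element-wise definition of quantifiability. For the forward direction, suppose $H$ is quantifiable, so $\vec{H}[i] \ge 0$ for all $i$. Fix $x$; every coordinate $i$ belonging to $x$ satisfies $\vec{H}_x[i] = \vec{H}[i] \ge 0$, while every coordinate not belonging to $x$ satisfies $\vec{H}_x[i] = 0$. Thus $\vec{H}_x[i] \ge 0$ for all $i$, i.e.\ $H|x$ is quantifiable. For the converse, suppose $H|x$ is quantifiable for every object $x$. Each coordinate $i$ of $\vec{H}$ belongs to a unique object $x$, and locality gives $\vec{H}[i] = \vec{H}_x[i] \ge 0$. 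As this holds for every coordinate, $H$ is quantifiable. The theorem therefore reduces to the elementary fact that a sum of vectors supported on disjoint coordinate blocks is coordinate-wise non-negative if and only if each block is.

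I expect the main obstacle to be verifying locality for the reader case, which involves the read index and the conditional branch of Definition~\ref{Definition:Quantifiability}. The reader value $-\left(\frac{1}{2}\right)^{\vec{I}[j]}$ depends on $\vec{I}[j]$, the number of reads that have landed on coordinate $j$; since $j$ encodes the object $x$ together with the returned output parameters, the reads feeding $\vec{I}[j]$ are exactly the reads of $H|x$ returning those parameters on $x$, and because the accumulated value $-\sum_{n=1}^{\vec{I}[j]}\left(\frac{1}{2}\right)^{n}$ depends only on the count and not the order, $\vec{R}[i]$ is object-local and unaffected by reads on other objects. I would also confirm that the branch test $\left(\vec{P}[i] + \vec{W}\_prod[i]\right) \ge 1$ and each summand $\vec{P}[i]$, $\vec{W}\_prod[i]$, $\vec{W}\_cons[i]$, $\vec{C}[i]$ at coordinate $i$ are computed from object-local contributions alone; then the same branch is selected whether one evaluates over $H$ or over $H|x$, and the two computations of coordinate $i$ coincide. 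Once these per-case checks are in place, the orthogonal block structure makes the biconditional fall out with no further work.
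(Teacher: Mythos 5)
Your proposal is correct and takes essentially the same route as the paper: the paper's proof likewise rests on the observation that each position of $\vec{H}$ is tagged with a single object, deriving the ``only if'' direction exactly as you do (coordinates associated with $x$ are non-negative because all coordinates are) and handling the ``if'' direction by appealing to closure of quantifiable histories under addition. Your write-up is somewhat more careful, since the locality lemma you isolate --- in particular the per-object behavior of the read index and of the conditional branch with the ceiling in the formal definition of quantifiability --- is precisely what legitimizes the paper's closure-under-addition step, which the non-linear computation of $\vec{H}$ does not grant for free.
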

\begin{proof}
It first must be shown that if each history $H|x$ for object $x$ is quantifiable, then history $H$ is quantifiable.
Since the addition of quantifiable histories is closed under addition, it follows that the composition of quantifiable object subhistories $H|x$ is also quantifiable.
Therefore, $H$ is quantifiable.

It now must be shown that if history $H$ is quantifiable, then each history $H|x$ for object $x$ is quantifiable.
Since $H$ is quantifiable, then each element of the vector $\vec{H} \ge 0$.
Each position in $\vec{H}$ corresponds to a unique configuration representing the process, object, and input/output that the method is invoked upon.
Since each element of the vector $\vec{H} \ge 0$, then for each element $i$ associated with object $x$, $\vec{H}[i] \ge 0$.
Each history $H|x$ for object $x$ is therefore quantifiable.
\end{proof}

\subsection{Non-Blocking and Non-Waiting Properties}
\label{SubSection:NonBlocking}
A correctness condition may inherently cause blocking, as is the case with serializability applied to transactions \cite{herlihy1990linearizability}.  Quantifiability shares with linearizability the non-blocking property, and for the same reason: it never forces a process with a pending invocation to block.
\textit{Lock-freedom} is the property that some thread is guaranteed to make progress. \textit{Wait-freedom} is the property that all threads are guaranteed to make progress.
Quantifiability is compatible with the existing synchronization methods for lock-freedom and wait-freedom because it is a non-blocking correctness property.

The requirement that all methods must succeed or be explicitly cancelled raises the question of how this is non-blocking.  Indeed a thread might choose to block if there is no way it can proceed without the return value or the state change resulting from the method.  It is a matter for the application to decide, not an inherent property of Quantifiability Principle~1.  For example, consider thread 1 calling a $pop$ method on a concurrent stack, $T1:s.pop() \rightarrow x$.  This can be written as \texttt{<Type> v = s.pop();} which is blocking in C.  Or it may be invoked as a call by reference in the formal parameters \texttt{s.pop(<Type> \&v);} which is non-blocking.  
The second invocation also permits the thread to block if desired by spinning on the address to check if a result is available.  If address \&v is not pointing to a value of \texttt{<Type>}, the method has not yet succeeded.   
Alternatively, instead of spin-waiting, a thread can do a ``context switch'' and proceed with other operations while waiting for the pending operation to succeed.
The thread can still perform other operations on the same data structure despite an ongoing pending operation.
Since quantifiability does not enforce program order, it is possible for operations called by the same thread to be executed out-of-order.
And if the thread decides the method is no longer needed, it can be cancelled.

The concept of retrieving an item to be fulfilled at a later time is implemented in C++11, C\#, and Java as \textit{promises} and \textit{futures}~\cite{Stroustrup2013-yo}. 
The \texttt{async} function in C++ calls a specified function and returns a future object without waiting for the specified function to complete.
The return value of the specified function can be accessed using the future object.
The \texttt{wait\_for} function in C++ is provided by the future object that enables a thread to wait for the item in the promise object to be set to a value for a specified time duration.
Once the \texttt{wait\_for} function returns a \textit{ready} status, the item value is retrieved by the future object through the \texttt{get} function.
The disadvantage of the \texttt{get} function is that it blocks until the item value is set by the promise object.
Once the \texttt{get} function returns the item, the future object is no longer valid, leading to undefined behavior if other threads invoke \texttt{get} on this future object.

Due to the semantics of the \texttt{get} function, 
we advise implementing retrieval of an item to be fulfilled at a later time with a shared object used to announce information, 
referred to as a \textit{descriptor object}~\cite{harris2002practical, Dechev2006-cb}.
Once the pending item is fulfilled, it is updated to a non-pending item using the atomic instruction Compare-And-Swap (CAS). 
CAS accepts as input a memory location, an expected value, and an update value. If the data referenced by the memory location is equivalent to the expected value, then the data referenced by the memory location is changed to the update value and true is returned. Otherwise, no change is made and false is returned.
Since CAS will only fail if another thread successfully updates the data referenced by the memory location, quantifiability can be achieved in a lock-free manner.

Wait-freedom is typically achieved using helping schemes in conjunction with descriptor objects to announce an operation to be completed in a table such that all threads are required to check the announcement table and help a pending operation prior to starting their own operation~\cite{kogan2012methodology}. 
However, as a consequence of the relaxed semantics allowed by quantifiability, contention avoidance can be utilized (discussed in Section~\ref{SubSection:QStack}) that allows threads to make progress on their own operations without interference from other threads.

\subsection{Proof of Non-Blocking Property}
\label{NonBlocking}

The following proof shows that quantifiability is non-blocking; that is, it does not require that it wait for another pending operation to complete.

\begin{theorem}
Let $inv$ be an invocation of a method $m$. If $\langle x~inv~P \rangle$ is a pending invocation in a quantifiable history $H$ with a corresponding vector $\vec{H}$, then either there exists a response $\langle x~res~P \rangle$ such that either $H \cdot \langle x~res~P \rangle$ is quantifiable or $H \setminus \langle x~inv~P \rangle$ is quantifiable.
\end{theorem}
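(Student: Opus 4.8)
The plan is to lean on the fact, established in Section~\ref{SubSection:FormalDefinition}, that $\vec{H}$ is assembled only from the method call set of completed invocation--response pairs; a pending invocation with no matching response therefore contributes nothing to $\vec{H}$. This immediately settles the cancellation disjunct in full generality: deleting $\langle x~inv~P \rangle$ from $H$ leaves the method call set, and hence the vector $\vec{H}$, unchanged, so $H \setminus \langle x~inv~P \rangle$ has the same non-negative vector and is quantifiable. I would then organize the argument by the classification of $m$ into producer, consumer, writer, and reader, exhibiting in each case whichever of the two disjuncts is appropriate.

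First I would dispatch the producer case through the completion disjunct. The response that completes a producer adjoins a method-call vector with a single $+1$ at the index $j$ encoding the pair of object $x$ and the produced item; adding it to $\vec{H}$ only raises one coordinate, so $\vec{H}[i]\ge 0$ is preserved everywhere and $H \cdot \langle x~res~P \rangle$ is quantifiable. For the consumer, writer, and reader cases I would check whether the system state exposes a configuration $i$ with $\vec{H}[i]\ge 1$ matching an admissible output of $m$ on $x$. If so, I complete the method against that item: the consumer subtracts $1$ from a coordinate that is already at least $1$, the writer's paired $-1/+1$ preserves non-negativity, and the reader, whose contributed value lies in $(-1,-\tfrac12]$ by the geometric-series bound, is absorbed by the ceiling rule of Definition~\ref{Definition:Quantifiability}, so completion again leaves $\vec{H}$ non-negative. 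If no such available item exists, I fall back to the cancellation disjunct established above.

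The main obstacle I expect is not the dichotomy itself but verifying that each claimed completion preserves quantifiability across all four value-assignment conventions simultaneously, particularly the reader case, where one must confirm that the fractional entry together with the guard $\left( \vec{P}[i] + \vec{W}\_prod[i] \right) \ge 1$ in Definition~\ref{Definition:Quantifiability} leaves the coordinate non-negative after the ceiling is taken. Because the cancellation disjunct is always available as a fallback whenever completion would fail, once these completion checks are in place the disjunction holds in every case, which is exactly the non-blocking guarantee that a pending invocation never needs to wait on another pending operation.
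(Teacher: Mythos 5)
Your proposal is correct and follows essentially the same route as the paper's proof: a case split on whether $m$ is a producer (always completable, since adding $+1$ to a coordinate with $\vec{H}[i]\ge 0$ preserves non-negativity), a consumer, writer, or reader (completable precisely when the relevant coordinate satisfies $\vec{H}[i]\ge 1$, with the writer also needing $\vec{H}[j]\ge 0$ for the new configuration), falling back to cancellation otherwise. Your justification of the cancellation disjunct --- that a pending invocation contributes nothing to $\vec{H}$, so removing it leaves the vector unchanged --- is slightly more explicit than the paper's appeal to quantifiability placing no restrictions on pending calls, but it is the same argument in substance.
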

\begin{proof}
If method $m$ is a producer method that produces item with configuration $i$, then there exists a response $\langle x~res~P \rangle$ such that $H \cdot \langle x~res~P \rangle$ is quantifiable because $\vec{H}[i] + 1$ is greater than zero since $\vec{H}[i] \ge 0$ by the definition of quantifiability.
If method $m$ is a consumer method that consumes an item with configuration $i$, then a response $\langle x~res~P \rangle$ exists if $\vec{H}[i] \ge 1$.
If method $m$ is a writer method that updates item with configuration $i$ to a new configuration $j$, then a response $\langle x~res~P \rangle$ exists if $\vec{H}[i] \ge 1$ and $\vec{H}[j] \ge 0$.
If method $m$ is a reader method that reads an item with configuration $i$, then a response $\langle x~res~P \rangle$ exists if $\vec{H}[i] \ge 1$.
If a response for method $m$ does not exist, then method $m$ can be cancelled.
Upon cancellation, $\langle x~inv~P \rangle$ is removed from history $H$.
Since quantifiability places no restrictions on the behavior of pending method calls, $H \setminus \langle x~inv~P \rangle$ is quantifiable.
\end{proof}

\section{Related Work}
Quantifiability is motivated by recent advances in concurrency research.  
Frequently cited works \cite{TSQueue2014, Hendler2004-ho} are already moving in the direction of the two principles stated in Section~\ref{Section:Principles}.   This section places quantifiability in context of several threads of research: the basis of concurrent correctness conditions, complexity of proving correctness and design of related data structures.  

\subsection{Relationship to Other Correctness Conditions}
\label{RelatedWork:CorrectnessCondition}
A \textit{sequential specification} for an object is a set of sequential histories for the object~\cite{Herlihy2011-yj}.
A sequential history $H$ is \textit{legal} if each subhistory in $H$ for object $x$ belongs to the sequential specification for $x$. 
Many correctness conditions for concurrent data structures are proposed in literature~\cite{papadimitriou1979serializability,Lamport1979-rd,herlihy1990linearizability,Herlihy2011-yj,aspnes1994counting,Afek2010-xv,ou2017checking}, all of which reason about concurrent data structure correctness by demonstrating that a concurrent history is equivalent to a legal sequential history.

Serializability ~\cite{papadimitriou1979serializability} is a correctness condition such that a history $h$ is serializable if and only if there is a serial history $h_s$ such that $h$ is equivalent to $h_s$.
A history $h$ is \textit{strictly serializable} if there is a serial history $h_s$ such that $h$ is equivalent to $h_s$, and an atomic write ordered before an atomic read in $h$ implies that the same order be retained by $h_s$. 
Papadimitriou draws conclusions implying that there is no efficient algorithm that distinguishes between serializable and non-serializable histories~\cite{papadimitriou1979serializability}.

Sequential consistency~\cite{Lamport1979-rd} is a correctness condition for multiprocessor programs such that the result of any execution is the same as if the operations of all processors were executed sequentially, and the operations of each individual processor appear in this sequence in program order. 
Since sequential consistency is not compositional, Lamport proposes that compositionality for sequentially consistent objects can be achieved by requiring that the memory requests from all processors are serviced from a single FIFO queue. 
However, a single FIFO queue is a sequential bottleneck that limits the potential concurrency for the entire system.

Linearizability~\cite{herlihy1990linearizability} is a correctness condition such that a history $h$ is linearizable if $h$ is equivalent to a legal sequential history, and each method call appears to take effect instantaneously at some moment between its invocation and response.
Herlihy et al.~\cite{Herlihy2011-yj} suggest that linearizability can be informally reasoned about by identifying a linearization point in which the method call appears to take effect at some moment between the method's invocation and response.
Identifying linearization points avoids reference to a legal sequential history when reasoning about correctness, but such reasoning is difficult to perform automatically.
Herlihy et al.~\cite{herlihy1990linearizability} compared linearizability with strict serializability by noting that linearizability can be viewed as a special case of strict serializability where transactions are restricted to consist of a single operation applied to a single object. 
Comparisons of correctness conditions for concurrent objects to serializability are made in a similar manner by considering a special case of serializability where transactions are restricted to consist of a single method applied to a single object.

Quiescent consistency~\cite{aspnes1994counting} is a correctness condition for counting networks that establishes a safety property for a network of two-input two-output computing elements such that the inputs will be forwarded to the correct output wires at any quiescent state. 
A \textit{step property} is defined to describe a property over the outputs that is always true at the quiescent state.
Quasi-linearizability~\cite{Afek2010-xv} builds upon the formal definition of linearizability to include a sequential specification of an object that is extended to a larger set that includes sequential histories that are not legal, but are within a bounded distance $k$ from a legal sequential history. 

Unlike the correctness conditions proposed in literature, quantifiability does not define correctness of a concurrent history by referencing an equivalent legal sequential history.
Quantifiability requires that the method calls be conserved, enabling correctness to be proven by quantifying the method calls and applying linear algebra to the method call vectors. 
This fundamental difference enables quantifiability to be verified more efficiently than the existing correctness conditions because applying linear algebra can be performed in $O(n$ + $c)$ time ($n$ is the number of method calls and $c$ is the number of process, object, and input/output configurations), while deriving the legal sequential histories has a worst case time complexity of $O(n!)$.

\subsection{Proving Correctness}
Verification tools are proposed~\cite{vechev2009experience,burckhardt2010line,zhang2015round,ou2017checking} to enable a concurrent data structure to be checked for correctness according to various correctness conditions.
Vechev et al.~\cite{vechev2009experience} present an approach for automatically checking linearizability of concurrent data structures. 
Burckhardt et al.~\cite{burckhardt2010line} present Line-Up, a tool that checks deterministic linearizability automatically. 
Zhang et al.~\cite{zhang2015round} present Round-up, a runtime verification tool for checking quasi-linearizability violations of concurrent data structures.
Ou et al.~\cite{ou2017checking} develop a tool that checks non-deterministic linearizability for concurrent data structures designed using the relaxed semantics of the C/C++ memory model.

These verification tools are all faced with the computationally expensive burden of generating all possible legal sequential histories of a concurrent history since this is the basis of correctness for the correctness conditions in literature.
The correctness verification tools presented by Vechev et al.~\cite{vechev2009experience} and Ou et al.~\cite{ou2017checking} accept user annotated linearization points to eliminate the need for deriving a legal sequential history for a reordering of overlapping methods.
Although this optimization is effective for fixed linearization points, it could potentially miss valid legal sequential histories for method calls with non-fixed linearization points in which the linearization point may change based on overlapping method calls. 

Bouajjani et al.~\cite{bouajjani2015tractable} present an approximation-based approach for detecting observational refinement violations that assigns intervals to method calls such that a method call $m_1$ happens before method call $m_2$ if $m_1$'s interval ends before $m_2$'s interval ends.
This approach is able to detect observational refinement violations in polynomial time, but it suffers from enumeration of possible histories constrained by the interval order for an execution.
Emmi et al.~\cite{emmi2015monitoring} present a verification technique for observational refinement that uses symbolic reasoning engines instead of explicit enumerations of linearizations.
This technique is limited to atomic collections, locks, and semaphores.
Sergey et al.~\cite{sergey2016hoare} propose a Hoare-style logic for reasoning about the program's inputs and outputs directly without referencing sequential histories.
Nanevski et al.~\cite{nanevski2019specifying} apply structure-preserving functions on resources to achieve proof reuse in separation logic.
The authors use their proposed logic to reason about program correctness using the heap rather than sequential histories.
While the Hoare-style specifications~\cite{sergey2016hoare} and structure preserving functions~\cite{nanevski2019specifying} are tailored for the non-linearizable objects they are describing, Quantifiability is designed such that correctness can be verified automatically and efficiently for arbitrary abstract data types.

\subsection{Design of Related Data Structures}
Several data structure design strategies are presented in literature that motivate the principles of quantifiability.
The concern of defining the behavior of partial methods when reaching an undefined object state is addressed by dual data structures~\cite{scherer2004nonblocking}.
Dual data structures are concurrent object implementations that hold reservations in addition to data to handle conditional semantics.
Dual data structures are linearizable and can be implemented to provide non-blocking progress guarantees including lock-freedom, wait-freedom, or obstruction-freedom.
The main difference between dual data structures and quantifiable data structures is the allowable order in which the requests may be fulfilled.
The relaxed semantics of quantifiability provides an opportunity for performance gains over the dual data structures.

Other data structure designs observe that contention can be reduced by allowing operations to be matched and eliminated if the combined effect does not change the abstract state of the data structure.
The elimination backoff stack (EBS) \cite{Hendler2004-ho} uses an elimination array where $push$ and $pop$ method calls are matched to each other at random within a short time delay if the main stack is suffering from contention.  
In the algorithm, the delay is set to a fraction of a second, which is a sufficient amount of time to find a match during busy times. If no match arrives, the method call retries its operation on the central stack object. When operating on the central stack object, the $pop$ method is at risk of failing if the stack is empty. However, if the elimination array delay time is set to infinite, the elimination backoff stack implements Quantifiability Principle~1, and all the method calls wait until they succeed.

The TS-Queue \cite{TSQueue2014} is one of the fastest queue implementations, claiming twice the speed of the elimination backoff version.  The TS Queue also relies on matching up method calls, enabling methods that would otherwise fail when reaching an undefined state of the queue to instead be fulfilled at a later time. In the TS Queue, rather than a global delay, there is a tunable parameter called \textit{padding} added to different method calls.  By setting an infinite time padding on all method calls, the TS Queue follows Quantifiability Principle~1. 

The EBS and TS-Queue share in common that they significantly improve performance by using a window of time in which pending method calls are conserved until they can succeed.  Quantifiability Principle~1 extends this window of time for conservation of method calls indefinitely, while allowing threads to cancel them as needed for specific applications.  

Contention due to frequently accessed elements in a data structure can be further reduced by relaxing object semantics.
The $k$-FIFO queue~\cite{kirsch2013fast} maintains $k$ segments each consisting of $k$ slots implemented as either an array for a bounded queue or a list for an unbounded queue. This design enables up to $k$ $enqueue$ and $dequeue$ operations to be performed in parallel and allows elements to be dequeued out-of-order up to a distance $k$. 
Quantifiability takes the relaxed semantics of the $k$-FIFO queue a step further by allowing method calls to occur out-of-order up to any arbitrary distance, leading to significant performance gains as demonstrated in Section~\ref{SubSection:Performance}. 

\section{Implementation}
\label{Section:Implementation}

A quantifiable stack and quantifiable queue are implemented to showcase the design of quantifiable data structures.
Since the quantifiable stack and quantifiable queue have similar design strategies, the implementation details are only provided for the stack.
Quantifiability is applicable to other abstract data types that deliver additional functionality beyond the standard producer/consumer methods provided by queues and stacks.
Consider a reader method such as a $read$ operation for a hashmap or a $contains$ operation for a set.
If the item to be read does not exist in the data structure, a \textit{pending item} is created and placed in the data structure at the same location where the item to be read would be placed if it existed.
If a pending item already exists for the item to be read, the reader method references this pending item.
Once a producer method produces the item for which the pending item was created, the pending item is updated to a regular (non-pending) item. 
Since the reader methods hold a reference to this item, they may check the address when desired to determine if the item of interest is available to be read.
A similar strategy can be utilized for writer methods.

\subsection{QStack}
\label{SubSection:QStack}
The quantifiable stack (QStack) is designed to conserve method calls while avoiding contention wherever possible. Consider the state of a stack receiving two concurrent $push$ operations. Assume a stack contains only Node~$1$. Two threads concurrently push Node~$2(a)$ and Node~$2(b)$. The state of the stack after both operations have completed is shown in Figure~\ref{fig:stack1}. The order is one of two possibilities: $5,7,9$, or $5,9,7$. Based on this quantifiable implementation, either $7$ or $9$ are valid candidates for a $pop$ operation.

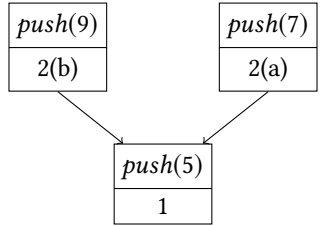
\begin{figure}
    \centering
    \begin{tikzpicture}[
    rectnode/.style={rectangle split, rectangle split parts=2, draw},
    ]
    \node[rectnode] at (0,0) (a) { $push(5)$ \nodepart{second} 1  };
    \node[rectnode] at (1.4,1.8) (b) { $push(7)$ \nodepart{second} 2(a)  };
    \node[rectnode] at (-1.4,1.8) (c) { $push(9)$ \nodepart{second} 2(b)  };
    
    \draw[->] (b.south) -- (a.45);
    \draw[->] (c.south) -- (a.135);
    \end{tikzpicture}
    
    \caption{Concurrent push representation of nodes "2(a)" and "2(b)"}
    \label{fig:stack1}
\end{figure}

The QStack is structured as a doubly-linked tree of nodes. Two concurrent $push$ method calls are both allowed to append their nodes to the data structure, forming a fork in the tree (Figure~\ref{fig:stack1}). $Push$ and $pop$ are allowed to insert or remove nodes at any ``leaf'' node. To facilitate this design we add a descriptor pointer to each node in the stack. At the start of each operation, a thread creates a descriptor object with all the details necessary for an arbitrary thread to carry out the intended operation.

\begin{algorithm}
\footnotesize
    \caption{Stack: Definitions}
    \label{alg:StackDefintion}
    \begin{multicols}{2}
    \begin{algorithmic}[1]
        \State \textbf{Struct} Node $\{$
        \State \textbf{T} value;
        \State \textbf{Op} op;
        \State \textbf{Node *} nexts[];
        \State \textbf{Node *} prev;
        \State  $\};$
        \State \textbf{Struct} Desc $\{$
        \State \textbf{T} value;
        \State \textbf{Op} op;
        \State \textbf{bool} active = true;
        \State $\};$
    \end{algorithmic}
    \end{multicols}
\end{algorithm}

Algorithm \ref{alg:StackDefintion} contains type definitions for the QStack. $Node$ contains the fields $value$, $op$, $nexts$ and $prev$. The $value$ field represents the abstract type being stored in the data structure. The $op$ field identifies the node as either a pushed value or an unsatisfied pop operation. The $nexts$ field is an array holding references to the children of the node, while $prev$ contains a reference to its parent. $Descriptor$ contains the $value$ and $op$ fields, as well as $active$. The $active$ field designates whether the associated operation for the descriptor object is currently pending, or if the thread performing that operation has completed it. The stack data structure has a global array $tail$, which contains all leaf nodes in the tree. 
The stack data structure also has a global variable $forkRequest$ that is used to indicate that another branch should be added to a node in the stack and is initialized to null.
The tree is initialized with a sentinel node in which the $active$ flag is set to false.

In order to conserve unsatisfied pops, we generalize the behaviour of $push$ and $pop$ operations with $insert$ and $remove$. If a $pop$ is made on an empty stack, we instead begin a stack of waiting $pop$ operations by calling $insert$ and designating the inserted node as an unfulfilled $pop$ operation. Similarly, if we call $push$ on a stack that contains unsatisfied pops, we instead use $remove$ to eliminate an unsatisfied $pop$ operation, which then finally returns the value provided by the incoming $push$.

\begin{algorithm}
\footnotesize
	\caption{Stack: Insert}
	\label{alg:StackInsert}
	\begin{algorithmic}[1]
		\Function{Insert} {$\textbf{Node *}\ cur,\ \textbf{Node *}\ elem,\ \textbf{int}\ index$}
		    \State \textbf{Desc*} $d$ = \textbf{new} Desc(v, op) \label{insert:desc}
		    \State \textbf{Node *} $curDesc = cur.desc$
		        
	        \If{$curDesc.active == true$} \label{insert:checkDescriptor}
	            \State \textbf{return} $false$
	        \EndIf
		        
	        \If{$cur.desc.CAS(currDesc, d)$} \label{insert:CAS}
	            \If{$top[index]\ != cur$}
    	                \State $d.active = false$
    	                \State \textbf{return} $false$
    	        \EndIf
    	            
	            \If{$cur.nexts.isEmpty()\ \&\ top.count(cur) == 1$} \label{insert:isLeaf}
		            \State $elem.prev = cur$
		            \State $cur.nexts.add(elem)$
		            \State $tail[index] = elem$
		            \State \textbf{Node *} $helperNode$ = $forkRequest$
		                \If{$helperNode\ != null\ \&\ forkRequest.CAS(helperNode, null)$}
		                \label{insert:forkRequest}
		                    \If{$helperNode.op == cur.op$}
        		                \State $helperNode.prev = cur$
        		                \State $cur.nexts.add(helperNode)$
        		                \State initialize a new tail pointer and set it equal to $helperNode$
        		            \EndIf
		                \EndIf
		            \State $d.active = false$ \label{insert:visibility}
		            \State \textbf{return} $true$ 
		        \Else
		            \State Remove dead branch \label{insert:removeBranch}
	            \EndIf
	        \EndIf
	        \State $d.active = false$
	        \State \textbf{return} $false$
		\EndFunction
	\end{algorithmic}
\end{algorithm}

\begin{algorithm}
\footnotesize
	\caption{Stack: Remove}
	\label{alg:StackRemove}
	\begin{algorithmic}[1]
		\Function{Remove} {$\textbf{Node *}\ cur,\ \textbf{int}\ index$}
		    \State \textbf{Desc*} $d$ = \textbf{new} Desc(op) \label{remove:desc}
		        \State \textbf{Node *} $curDesc = cur.desc$
		        \State \textbf{Node *} $prev = cur.prev$
		        
		        \If{$curDesc.active == true$} \label{remove:checkDescriptor}
		            \State \textbf{return} $false$
		        \EndIf
		        
		        \If{$cur.desc.CAS(currDesc, d)\ \&\ top.count(cur) == 1$} \label{remove:CAS}
    		        \If{$top[index]\ != cur$}
    	                \State $d.active = false$
    	                \State \textbf{return} $false$
    	            \EndIf
		            \If{$cur.nexts.isEmpty()$} \label{remove:isLeaf}
		                \State $v = cur.value$
		                \State $prev.nexts.remove(cur)$
		                \State $tail[index] = prev$
		                \State $d.active = false$ \label{remove:visibility}
		                \State \textbf{return} $true$ 
		            \Else
                        \State Remove dead branch
		            \EndIf
		        \EndIf
		        \State $d.active = false$
		        \State \textbf{return} $false$
		\EndFunction
	\end{algorithmic}
\end{algorithm}

Algorithm \ref{alg:StackInsert} details the pseudocode for the $insert$ operation. A node $cur$ is passed in, which is expected to be a leaf node. In addition, $elem$ is passed in, which is the node to be inserted. We check the descriptor of $cur$ to see if another thread is already performing an operation at this node on line \ref{insert:checkDescriptor}. If there is no pending operation, then we attempt to update the descriptor to point to our own descriptor on line \ref{insert:CAS}. If this is successful, we check on line \ref{insert:isLeaf} if $cur$ is a leaf node by ensuring $cur.nexts$ is empty and that $top$ contains only 1 reference to $cur$. If it is not, that means that $cur$ was previously a fork in the tree, but all nodes from one of the branches has been popped. In this case, we remove the index of the $tail$ array corresponding to the empty branch, effectively removing the fork at $cur$ from the tree. If $cur$ is determined to be a leaf node on line \ref{insert:isLeaf}, we are free to make modifications to $cur$ without interference from other threads. In this case, $elem$ is linked with $cur$ and the tail pointer is updated. 

The $remove$ method is given by Algorithm \ref{alg:StackRemove}. The $remove$ method is similar to the $insert$ method except that after the CAS on line \ref{remove:CAS}, we check if $cur$ is a leaf node before removing it from the tree.

$Push$ and $pop$ methods wrap these algorithms, as both operations need to be capable of inserting or removing a node depending on the state of the stack. Care should be taken that $push$ only removes a node when the stack contains unsatisfied $pop$ operations, while $pop$ should only insert a node when the stack is empty, or already contains unsatisfied $pop$ operations.  

Algorithm~\ref{alg:StackPush} details the $push$ method for the QStack. On line \ref{Push:newNode} we allocate a new node, and set the $value$ and $op$ field. Since a node may represent either a pushed value, or a waiting pop, we need to use $op$ to designate the operation of the node. At line \ref{Push:randomIndex}, we choose an index at which to try and add our node. The $tail$ array contains all leaf nodes. The $getRandomIndex()$ method avoids contention with other threads by choosing a random index.

If a thread is failing to make progress (line \ref{Push:loop}), we update the $forkRequest$ variable to contain the node for the delayed operation. When a successful $insert$ operation finds a non-null value in the $forkRequest$ variable on line \ref{insert:forkRequest}, it inserts that node as a sibling to its own node. This creates a fork at the node $cur$, increasing the chance of success for future $insert$ operations. 

If the node's operation is determined to be a $pop$ on line~\ref{Push:check}, then the $push$ operation will fulfill the unsatisfied $pop$ operation.
Otherwise, the $push$ operation will proceed to insert its node into the stack.
The $pop$ method is given by Algorithm~\ref{alg:StackPop}. 
Similar to push, a random index is selected on line~\ref{Pop:randomindex} and the corresponding node is retrieved on line~\ref{Pop:readTail}.
If the node's operation is determined to be a $push$ on line~\ref{Pop:check} then the node is removed from the top of the stack. 
Otherwise, the stack is empty and the unsatisfied $pop$ operation is inserted in the stack.

\begin{algorithm}
\footnotesize
	\caption{Stack: Push}
	\label{alg:StackPush}
	\begin{algorithmic}[1]
		\Function{Push} {$\textbf{T}\ v$}
            \State \textbf{Node*} $elem$ = \textbf{new} Node(v, PUSH) \label{Push:newNode}
            \State \textbf{bool} $ret = false$
            \State \textbf{int} $loops = 0$
		    \While{$true$} \label{Push:while}
		        \State $loops++$
		        \State \textbf{int} $index = getRandomIndex()$ \label{Push:randomIndex}
		        \State \textbf{Node *} $cur = tail[index]$ \label{Push:readTail}
		        \If{$cur == null$} \label{Push:nullcheck}
		            \State \textbf{Continue}
		        \EndIf
		        \If{$cur.op == POP$} \label{Push:check}
		            \State $ret = remove(cur, v)$ \label{Push:remove}
		        \Else
		            \State $ret = insert(cur, elem, v)$ \label{Push:insert}
		        \EndIf
		        \If{ret}
		            \State \textbf{break}
		        \EndIf
		            
	            \If{$loops > FAIL\_THRESHOLD~\&~!forkRequest$} \label{Push:loop}
                    \State $forkRequest.CAS(null, cur)$ \label{Push:forkRequest}
                    \State \textbf{Break}
		        \EndIf
		    \EndWhile
		\EndFunction
	\end{algorithmic}
\end{algorithm}

\begin{algorithm}
\footnotesize
	\caption{Stack: Pop}
	\label{alg:StackPop}
	\begin{algorithmic}[1]
		\Function{Pop} {$\textbf{T}\ \&v$}
            \State \textbf{Node*} $elem$ = \textbf{new} Node(v, POP)
            \State \textbf{bool} $ret = false$
            \State \textbf{int} $loops = 0$
		    \While{$true$} \label{Pop:while}
		        \State $loops++$
		        \State \textbf{int} $index = getRandomIndex()$ \label{Pop:randomindex}
		        \State \textbf{Node *} $cur = tail[index]$ \label{Pop:readTail}
		        \If{$cur == null$} \label{Pop:nullcheck}
		            \State \textbf{Continue}
		        \EndIf
		        \If{$cur.op == PUSH$} \label{Pop:check}
		            \State $ret = remove(cur, \&v)$ \label{Pop:remove}
		        \Else
		            \State $ret = insert(cur, elem, \&v)$ \label{Pop:insert}
		        \EndIf
		        \If{ret}
		            \State $v = cur.value$
		            \State \textbf{break}
		        \EndIf
		        
		        \If{$loops > FAIL\_THRESHOLD~\&~!forkRequest$} \label{insert:loop}
                    \State $forkRequest.CAS(null, cur)$ \label{Pop:forkRequest}
                    \State \textbf{Break}
		        \EndIf
		    \EndWhile
		\EndFunction
	\end{algorithmic}
\end{algorithm}

\begin{theorem}
The QStack is quantifiable.
\end{theorem}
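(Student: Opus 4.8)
The plan is to follow the recipe laid out in the section ``Proving that a Data Structure is Quantifiably Correct'': for each method I would exhibit a \emph{visibility point} to establish atomicity and isolation, and then argue conservation by showing that every invocation either completes on a successful code path or stores a pending request that a future matching invocation is guaranteed to fulfill. Because \textsc{Push} and \textsc{Pop} are thin wrappers that dispatch to \textsc{Insert} and \textsc{Remove} according to the $op$ field of the chosen leaf (lines~\ref{Push:check} and~\ref{Pop:check}), it suffices to analyze the four primitives, after which the wrapper-level claims follow. Since the QStack exposes only producer and consumer methods, Definition~\ref{Definition:Quantifiability} reduces to the producer and consumer vectors $\vec{P}$ and $\vec{C}$, so the ultimate goal is to show $\vec{H}[i] = \vec{P}[i] + \vec{C}[i] \ge 0$ for every configuration $i$.

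First I would fix the visibility points. For \textsc{Insert} the visibility point is the write $d.active = \mathit{false}$ on line~\ref{insert:visibility}, executed only after the new node has been linked to its parent and the matching $tail$ entry updated; for \textsc{Remove} it is the symmetric write on line~\ref{remove:visibility}, executed after the node has been unlinked and $tail$ redirected to the parent. I would then derive isolation from the descriptor protocol: a thread commits to operating on $cur$ only through the CAS on line~\ref{insert:CAS} (respectively line~\ref{remove:CAS}), and every operation first checks that $cur.desc$ is inactive (lines~\ref{insert:checkDescriptor} and~\ref{remove:checkDescriptor}), so at most one operation is in progress at a node and its intermediate states are never observed by another thread. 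Atomicity follows because a node is published into, or retracted from, the reachable part of the tree exactly at the visibility point, so every other thread sees the method as taking effect entirely or not at all. The leaf test on lines~\ref{insert:isLeaf} and~\ref{remove:isLeaf}, together with the dead-branch pruning, guarantees that modifications occur only at genuine leaves, which is what makes the fork structure safe to mutate concurrently.

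Next I would establish conservation. On the successful path \textsc{Push} and \textsc{Pop} return, so the only case to examine is the unsuccessful path, and here the design never discards a method call: a \textsc{Pop} reaching an empty stack (or one already holding waiting pops) does not return null but inserts a node tagged $POP$ (line~\ref{Pop:insert}), storing its pending request inside the structure. I would then supply the required matching statement: whenever unserved pops are present the structure contains no value nodes, so every leaf is $POP$-tagged and any subsequent \textsc{Push} is routed through \textsc{Remove} (line~\ref{Push:check}), which fulfills a waiting pop and supplies it the incoming value; and since an unbounded \textsc{Push} can always insert, every $POP$-tagged node is eventually matched. This reduces quantifiability to the invariant that the value-tagged nodes correspond exactly to pushes not yet consumed and the $POP$-tagged nodes to pops not yet served, from which $\vec{H}[i] \ge 0$ for every $i$ follows directly by Definition~\ref{Definition:Quantifiability}, because no value is ever removed or returned that was not first produced by a real push.

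The hard part will be discharging this invariant in the presence of the forking and pruning machinery rather than the single-leaf case. Specifically, I expect the delicate step to be showing that the $forkRequest$ handoff on lines~\ref{insert:forkRequest} and~\ref{Push:forkRequest}, which splices a delayed node in as a sibling, neither duplicates nor loses a method call, and that removing a dead branch (line~\ref{insert:removeBranch} and the analogous case in \textsc{Remove}) only ever deletes an already-emptied fork and never a node still carrying an unmatched value or an unserved pop. Establishing that these structural rewrites preserve the value-node/pop-node invariant under arbitrary interleavings, so that each visibility point contributes exactly the $\pm 1$ predicted by the value-assignment scheme, is where the real work lies; once that invariant is secured, nonnegativity of $\vec{H}$ and hence quantifiability are immediate.
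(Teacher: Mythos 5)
Your proposal is correct and follows essentially the same route as the paper's own proof: it establishes the same visibility points (the writes setting the descriptor's \emph{active} flag to false in \textsc{Insert} and \textsc{Remove}), lets \textsc{Push} and \textsc{Pop} inherit them through the dispatch on the node's $op$ field, and argues conservation by the same push-fulfills-waiting-pop / pop-inserts-pending-request mechanism. Your closing observation that the $forkRequest$ handoff and dead-branch pruning still need a preserved invariant is a fair point about remaining rigor, but it identifies work the paper's proof also leaves informal rather than a divergence in approach.
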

\begin{proof}
To prove that the QStack is quantifiable it must be shown that each of the methods preserve atomicity, isolation, and conservation.
A visibility point is established for each of the methods that demonstrates that each method preserves atomicity and isolation.
\\ \textbf{Insert:} The $insert$ method creates a new descriptor on line~\ref{insert:desc}, where the $active$ field is initialized to true. When the CAS succeeds on line~\ref{insert:CAS}, any other thread that reads the descriptor on line~\ref{insert:checkDescriptor} when calling $insert$ (or line~\ref{remove:checkDescriptor} of Algorithm~\ref{alg:StackRemove} when calling $remove$) will observe that the $active$ field is true and will continue from the beginning of the while loop on line~\ref{Push:while} of Algorithm~\ref{alg:StackPush} when calling $push$ (or line~\ref{Pop:while} of Algorithm~\ref{alg:StackPop} when calling $pop$). 
When the if statement on line~\ref{insert:isLeaf} succeeds, the current thread sets the descriptor's $active$ field to false on line~\ref{insert:visibility}. 
Since threads that were spinning due to the if statement on line~\ref{insert:checkDescriptor} (or line~\ref{remove:checkDescriptor} of Algorithm~\ref{alg:StackRemove} when calling $remove$) are now able to observe the effects of the operation associated with the previous active descriptor, the visibility point for the $insert$ method is line~\ref{insert:visibility}.
\\ \textbf{Remove:} The $remove$ method creates a new descriptor on line~\ref{remove:desc}, where the $active$ field is initialized to true. When the CAS succeeds on line~\ref{remove:CAS}, any other thread that reads the descriptor on line~\ref{remove:checkDescriptor} when calling $remove$ (or line~\ref{insert:checkDescriptor} of Algorithm~\ref{alg:StackInsert} when calling $insert$) will observe that the $active$ field is true and will continue from the beginning of the while loop on line~\ref{Push:while} of Algorithm~\ref{alg:StackPush} when calling $push$ (or line~\ref{Pop:while} of Algorithm~\ref{alg:StackPop} when calling $pop$). When the if statement on line~\ref{remove:isLeaf} succeeds, the current thread sets the descriptor's $active$ field to false on line~\ref{remove:visibility}. 
Since threads that were spinning due to the if statement on line~\ref{remove:checkDescriptor} (or line~\ref{insert:checkDescriptor} of Algorithm~\ref{alg:StackInsert} when calling $insert$) are now able to observe the effects of the operation associated with the previous active descriptor, the visibility point for the $remove$ method is line~\ref{remove:visibility}.
\\ \textbf{Push:} The $push$ method accesses the node at a random tail index on line~\ref{Push:readTail}. If the operation of the node is a $pop$, then $remove$ is called on line~\ref{Push:remove}, so the visibility point is line~\ref{remove:visibility} of Algorithm~\ref{alg:StackRemove}. Otherwise, $insert$ is called on line~\ref{Push:insert}, so the visibility point is line~\ref{insert:visibility} of Algorithm~\ref{alg:StackInsert}.
\\ \textbf{Pop:} The $pop$ method accesses the node at a random tail index on line~\ref{Pop:readTail}. If the operation of the node is a $push$, then $remove$ is called on line~\ref{Pop:remove}, so the visibility point is line~\ref{remove:visibility} of Algorithm~\ref{alg:StackRemove}. Otherwise, $insert$ is called on line~\ref{Pop:insert}, so the visibility point is line~\ref{insert:visibility} of Algorithm~\ref{alg:StackInsert}.

It now must be shown that the method calls are conserved. 
Since $insert$ and $remove$ are utility functions, only $push$ and $pop$ must be conserved.
\\ \textbf{Push:} The $push$ method checks if the operation of the node at the tail is a $pop$ on line~\ref{Push:check}. If the check succeeds, then the $push$ fulfills the unsatisfied $pop$ by removing it from the stack at line~\ref{Push:remove}. Otherwise, it proceeds with its own operation by calling $insert$ at line~\ref{Push:insert}. Since a $pop$ request is guaranteed to be fulfilled if one exists due to the check on line~\ref{Push:nullcheck}, and $forkRequest$ is updated on line~\ref{Push:forkRequest} to the current node if the loop iterations exceeds the $FAIL\_THRESHOLD$, $push$ satisfies method call conservation.
\\ \textbf{Pop:} The $pop$ method checks if the operation of the node at the tail is a $push$ on line~\ref{Pop:check}. If the check succeeds, then the $pop$ proceeds with its own operation by removing it from the stack at line~\ref{Pop:remove}. Otherwise, it places its unfulfilled request by calling $insert$ at line~\ref{Pop:insert}. Since a $pop$ will only place a request if no nodes associated with a $push$ operation exist in the stack due to the check on line~\ref{Pop:nullcheck}, and $forkRequest$ is updated on line~\ref{Pop:forkRequest} to the current node if the loop iterations exceeds the $FAIL\_THRESHOLD$, $pop$ satisfies method call conservation.
\end{proof}

\subsection{Performance}
\label{SubSection:Performance}

The QStack and QQueue were tested against the fastest available published work, along with classic examples.  Stack results are shown in Figure \ref{fig:QStack}, and queue results in Figure \ref{fig:QQueue}.  The x-axis plots the number of threads available for each run.  The y-axis plots method calls per microsecond.  Plot line color and type show the different implementations.

\begin{figure*}[h]
 \begin{subfigure}[t]{0.47\textwidth}
 \includegraphics[width=1.05\textwidth]{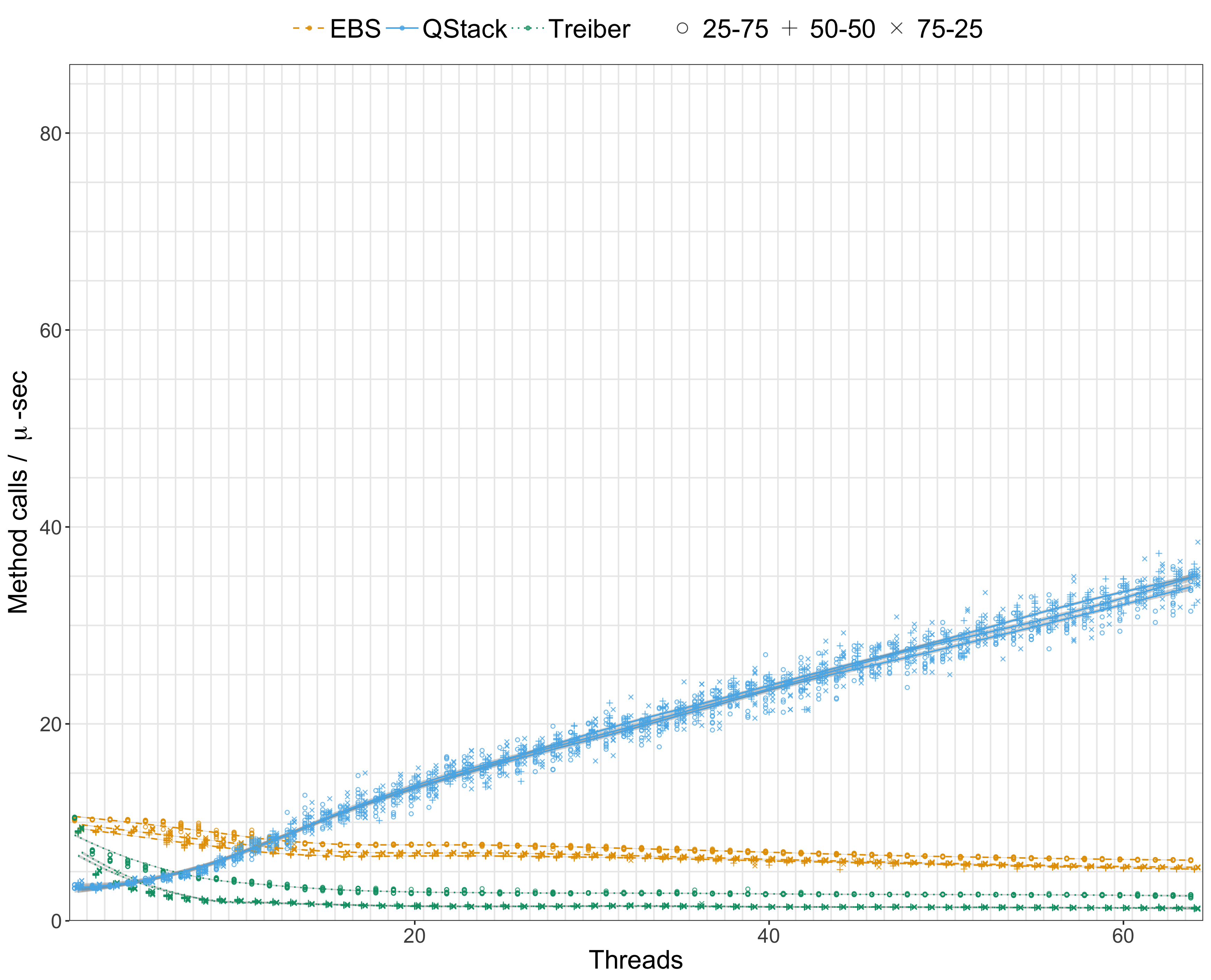}
 \caption{QStack, EBS and Treiber stack.} \label{fig:QStack}
 \end{subfigure}
 \hspace{1em}
 \begin{subfigure}[t]{0.47\textwidth}
 \includegraphics[width=1.05\textwidth]{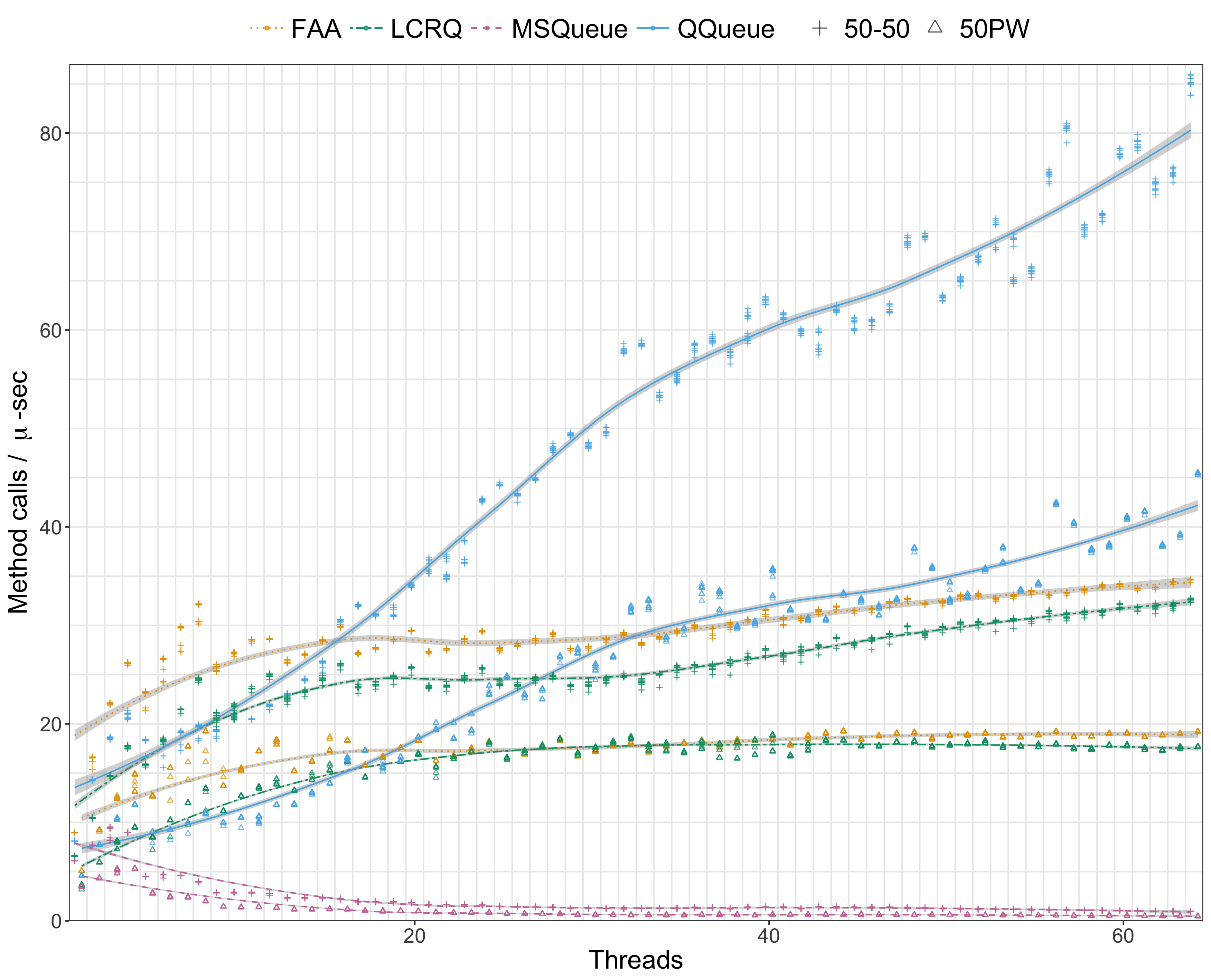}
 \caption{QQueue, FAA queue, LCRQ and MS queue.} \label{fig:QQueue} 
 \end{subfigure}
\caption{Performance Analysis on the AMD\textsuperscript{\textregistered}  EPYC\textsuperscript{\textregistered} server} 
\end{figure*}

Experiments were run on an AMD\textsuperscript{\textregistered} EPYC\textsuperscript{\textregistered} server of 2GHz clock speed and 128GB memory, with 32 cores delivering a maximum of 64 simultaneous multi-threads.  The operating system is Ubuntu 18.04 LTS and code is compiled with gcc 7.3.0 using -O3 optimizations.

The QStack was compared with the lock-free Elimination Backoff Stack (EBS)  \cite{Bar-Nissan2011-cb} and lock-free Treiber Stack \cite{Treiber1986-cr}.   These were selected as representative of a relaxed semantics stack and a classic linearizable stack.   
With a single thread, Treiber and EBS demonstrate similar performance, while QStack is lower due to the overhead of descriptors, which incur more remote memory accesses. The operations mix made little difference, but the Treiber Stack and EBS showed slightly higher performance at 25-75 because they quickly discard unsatisfied pop calls, returning $null$.  As threads are added, Treiber drops off quickly due to contention.  At five threads QStack overtakes Treiber and at 12 threads becomes faster than EBS.  The salient result is that the QStack continues to scale, achieving over five times EBS performance with 64 threads.  The other implementations consume resources to maintain order at microsecond scale instead of serving requests as quickly as possible with best efforts ordering.

Testing methodology follows those used in the original EBS presentation, going from one to 64 threads with five million operations per thread.  Memory is pre-allocated in the stack experiments, and for each run the program is restarted by a script to prevent the previous memory state from influencing the next run. The Boost library \cite{Beman_Dawes2018-bs} is used to create a uniform random distribution of method calls based on the different mixes.  

Stack $push$-$pop$ mixes of 25-75, 50-50 and 75-25 were tested for each implementation across all threads.  Queue $enqueue$-$dequeue$ mixes were temporal variations on a 50-50 mix.   For both stack and queue, there were a minimum of 10 trials per thread per mix.  The data was smoothed using the LOESS method as implemented in the ggplot2 library. Shaded areas indicate the 95 percent confidence limits for the lines.  Additionally, the data points for every run are shown in both stack and queue plots, with slight x-offsets to the left and right inside the column for readability. 

The QQueue was compared with the lock-free LCRQ \cite{Morrison2013-mc}, the wait-free FAA queue \cite{Yang2016-io} and the lock-free MS queue \cite{Michael1995-vf}.  The LCRQ and FAA are the fastest queues in a recent benchmark framework with ACM verified code artifacts \cite{Yang2018-lw}.  The MS queue is a classic like the Treiber stack.  The framework uses only 50-50 mixes, one random (50-50) and one pairwise (50PW).  The QQueue performs similarly to LCRQ until overtaking it at 14 threads, then overtaking the wait-free FAA queue at 18 threads.  The FAA queue is exceptional as it performs as well or better than the alternative lock-free implementations. The TS-Queue \cite{TSQueue2014} and the Multiqueue \cite{Rihani2015-cf} are queues of interest published more recently than the FAA queue, but were not selected because verified code artifacts have not been published. 

Queue experiments follow the methodology of the Yang and Mellor-Crummey framework \cite{Yang2018-lw} and use the queue implementations provided in the source code.  Memory allocation is dynamic within the framework. Benchmarks provided are two variations on 50-50 mixes, one random and the other pairwise. 
The different temporal distributions within the 50-50 mix have more influence on the results than different mixes (25-75, 50-50, 75-25) used in the stack experiments.  

In both the pairwise and random mixes, the QQueue continues to scale to the limit of hardware support, more than double the performance of FAA and LCRQ at 64 threads.

The quantifiable containers continue to scale until all threads are employed, with slightly reduced slope in the simultaneous multi-threading region from 32 to 64 threads.  Other implementations, including those that are linearizable with relaxed semantics, could maintain microscale order only at the cost of scalability.  Furthermore, linearizability may cause unfairness where method calls are not conserved.  It follows that quantifiability allows for the design of fast and highly scalable multiprocessor data structures for modern multi-core computing platforms.

\subsection{Entropy Applied to Concurrent Objects}

With a large number of threads and therefore many overlapping method calls, quantifiability and linearizability both accept any ordering of the overlapping calls.  The results presented in Section~\ref{SubSection:Performance} show performance gains for prototype quantifiable data structures over their linearizable counterparts.  What is missing is a way to measure the disorder introduced to achieve such gains.  This section will propose a measure of the disorder and apply it to the experimental results shown in Figure \ref{fig:QStack}.

\begin{figure*}[h]
 \begin{subfigure}[b]{0.47\textwidth}
 \includegraphics[clip, trim=0.2cm 5.0cm 0.2cm 5.0cm, width=1.05\textwidth]{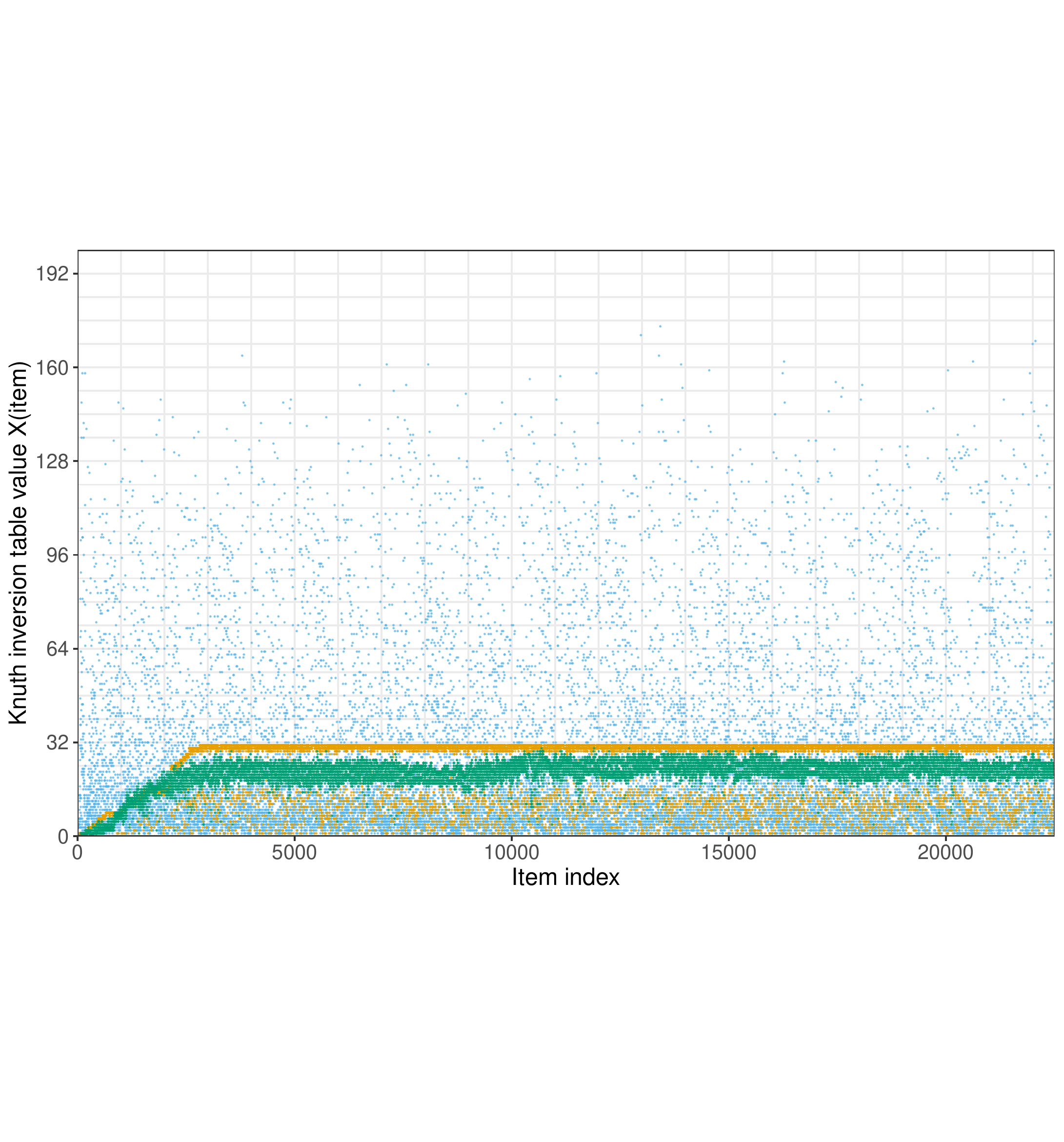}
 \caption{Knuth inversions X(item) 32 threads} 
 \label{fig:Events32}
 \end{subfigure}
 \hspace{1em}
 \begin{subfigure}[b]{0.47\textwidth}
 \includegraphics[clip, trim=0.2cm 5.0cm 0.2cm 5.0cm, width=1.05\textwidth]{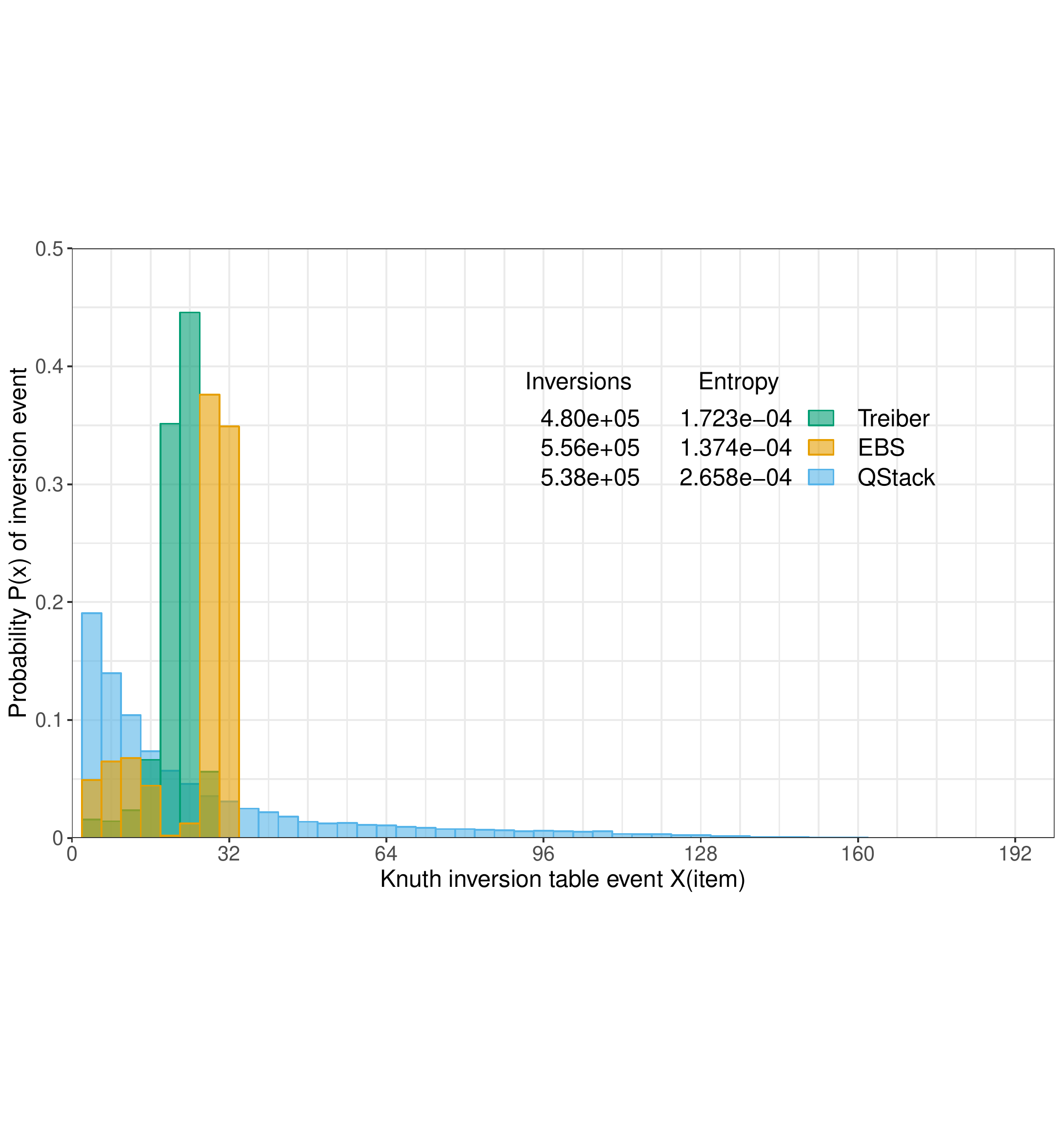}
 \caption{Distribution P(X) of inversions} 
 \label{fig:PMD32} 
 \end{subfigure}
 
 \begin{subfigure}[b]{0.47\textwidth}
 \includegraphics[clip, trim=0.2cm 5.0cm 0.2cm 4.0cm, width=1.05\textwidth]{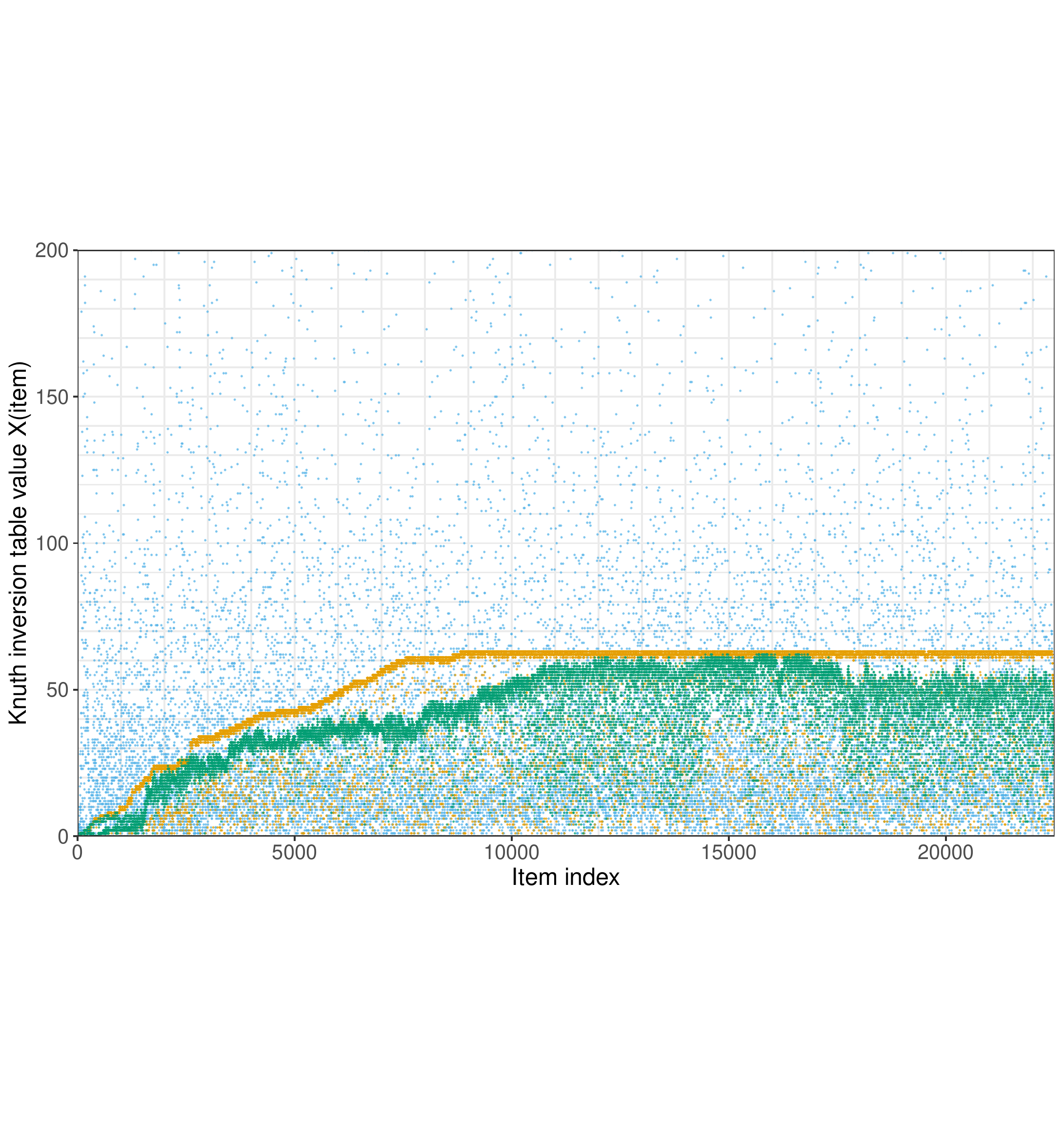}
 \caption{Knuth inversions X(item) 64 threads} 
 \label{fig:Events64}
 \end{subfigure}
 \hspace{1em}
 \begin{subfigure}[b]{0.47\textwidth}
 \includegraphics[clip, trim=0.2cm 5.0cm 0.2cm 4.0cm, width=1.05\textwidth]{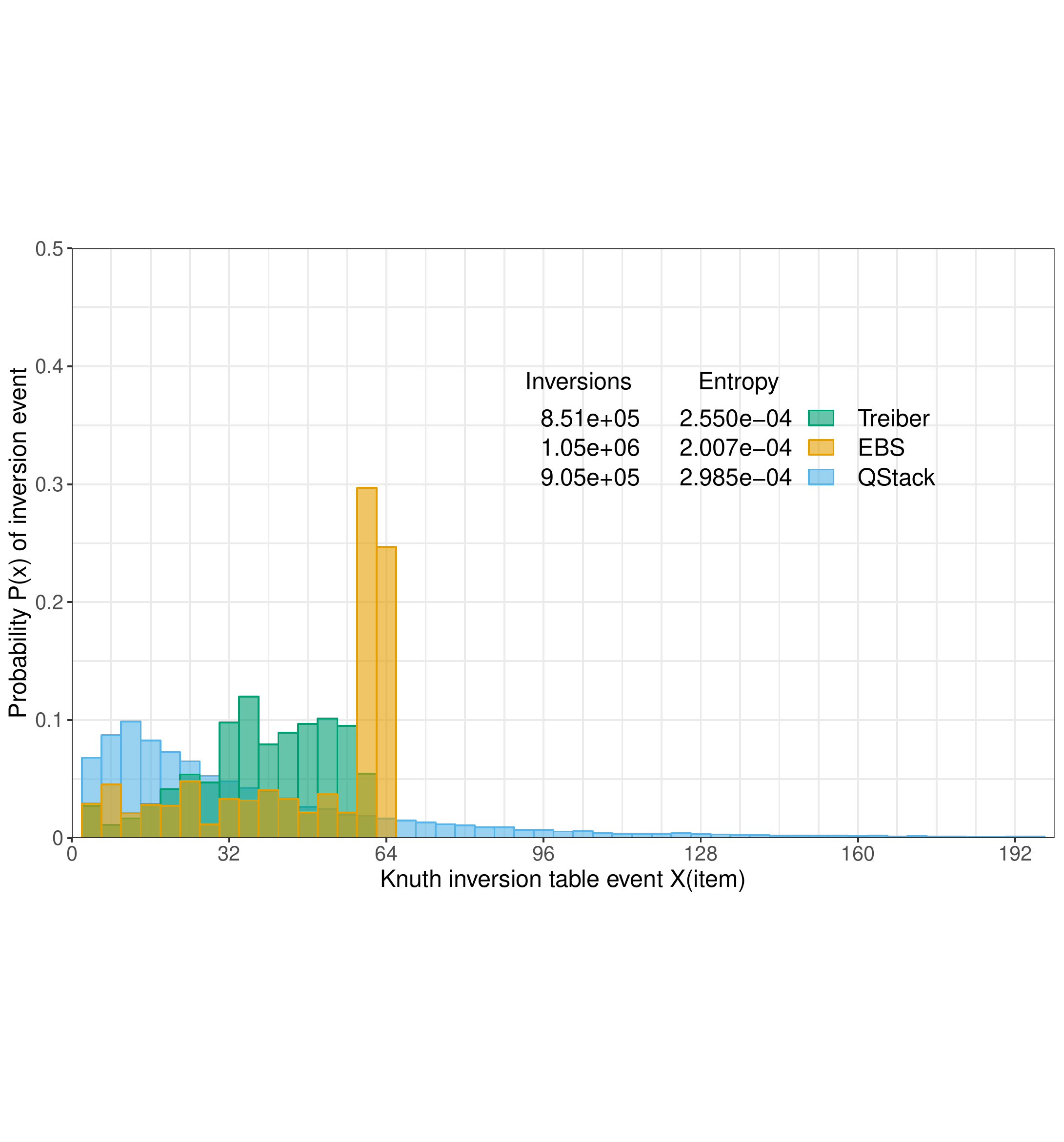}
 \caption{Distribution P(X) of inversions} 
 \label{fig:PMD64} 
 \end{subfigure}
\caption{Inversion events X(item) and distribution P(X) for concurrent stacks, 32 and 64 threads.} 

\end{figure*}

Entropy applied to information systems is called \textit{Shannon entropy}, also referred to as \textit{information entropy}~\cite{shannon1948mathematical}.  
Shannon entropy measures the amount of uncertainty in a probability distribution~\cite{Goodfellow-et-al-2016}. 
With a single process calling a non-concurrent object, the results are completely predictable. Concurrent data structures, even provably linearizable ones, may admit unpredictable results.  A perfectly ordered data stream input sequentially to a linearizable FIFO queue will not necessarily emerge in the same order.  Overlapping method calls together with the rules of linearizability may allow a great many different correct output orders. In the literature this divergence from the real time order is called an \textit{error rate}.   This term is misleading as it implies failure or incorrectness.  Where the unpredictability of a concurrent system is not an error, but lies within the bounds of correctness, we suggest \textit{entropy} as the measure.  In many physical systems, the change in entropy increases with the speed of a reaction.  Computing systems cannot escape the general applications of physical laws. The motivation behind generalizing entropy for concurrent objects is to provide the ability to measure the uncertainty in a concurrent system for the comparison of concurrent correctness conditions. 

If the natural numbers $1,2 ... N$ are sent into a FIFO queue and emerge intact, each successive dequeue method call output will always be greater than the preceding dequeue method call output.  This scenario represents perfect predictability, and zero entropy.  Likewise, a LIFO stack output would perfectly reverse the order.
The interesting events are when items come back from our data structures in the ``wrong'' order.   These are called \textit{surprises} in the literature, and their probability distribution of occurrence is called the \textit{surprisal}.
For an ordered list $a_1, a_2, ... a_i$, an inversion \cite{Knuth1998-td} in the list is a surprise, and the inversion count $x(j)$ is defined for each list element $a_j$ as follows:
\begin{equation}
x(j) = count( i<j,  a_i > a_j) + count(i>j, a_i < a_j)
\end{equation}
Given inversion count $x(j)$ defined over the concurrent history with $n$ items, let $k$ be the number of items with an inversion count of $X$. The possible values of $X$ are \{ $X_0, ..., X_n$\}.  The discrete probability distribution of surprising events is $P(X)=\frac{k}{n}$.
The entropy for a set of concurrent histories generated in an experiment is: 

\begin{equation}
\label{eq:entropy}
H(X) = - \sum_{i=0}^{n} P(X_i) \textnormal{log} P(X_i)   
\end{equation}

To gather inversion data for the Treiber Stack, EBS and QStack , the real time order of items pushed to the stack was measured using instrumented code as done in \cite{Dodds2014-gi}.  The push order was compared to the pop order and inversion events were counted.  Raw data points are shown in Figures \ref{fig:Events32}, \ref{fig:Events64} and the discrete probability distributions obtained are shown in Figures \ref{fig:PMD32}, \ref{fig:PMD64}. 
It is notable that the Treiber stack and the EBS both show a hard limit on the maximum inversion, being close to the number of threads in each case.  This is intuitive for linearizable or near-linearizable data structures because this is the maximum number of overlapping method calls. 

At 32 threads, there is some dispersion in the results for Treiber and EBS, but the entropy caused by the QStack is double since there is more uncertainty in method call ordering for the QStack in comparison to the Treiber stack and EBS. 
At 64 threads, the dispersion is much greater for the EBS and Treiber, and only slightly more for the QStack.   The entropy increase reflects this trend since a larger variance in the probability distribution yields higher entropy. Convergence of results at higher thread counts for the concurrent data structures tested is reflected in the similar entropy $P(X)$. 
The performance results presented in Figure~\ref{fig:QStack} showcase how the QStack design leverages the uncertainty in a concurrent system
to deliver high scalability obtained through relaxed semantics.

\section{Conclusion}
Quantifiability is a new concurrent correctness condition compatible with drivers of scalability: architecture, semantics and complexity.  Quantifiability is compositional without dependence upon timing or data structure semantics and is free of inherent locking or waiting.  The convenient expression of quantifiability in a linear algebra model offers the promise of reduced verification time complexity and powerful abstractions to facilitate concurrent programming innovation. 
The relaxed semantics permitted by quantifiability allow for significant performance gains through contention avoidance in the implementation of concurrent data structures. Entropy can be applied to evaluate the tradeoff between relaxed semantics and performance.




\bibliography{ARXIV-Quantifiability.bib}


\begin{thebibliography}{00}


\ifx \showCODEN    \undefined \def \showCODEN     #1{\unskip}     \fi
\ifx \showDOI      \undefined \def \showDOI       #1{#1}\fi
\ifx \showISBNx    \undefined \def \showISBNx     #1{\unskip}     \fi
\ifx \showISBNxiii \undefined \def \showISBNxiii  #1{\unskip}     \fi
\ifx \showISSN     \undefined \def \showISSN      #1{\unskip}     \fi
\ifx \showLCCN     \undefined \def \showLCCN      #1{\unskip}     \fi
\ifx \shownote     \undefined \def \shownote      #1{#1}          \fi
\ifx \showarticletitle \undefined \def \showarticletitle #1{#1}   \fi
\ifx \showURL      \undefined \def \showURL       {\relax}        \fi
\providecommand\bibfield[2]{#2}
\providecommand\bibinfo[2]{#2}
\providecommand\natexlab[1]{#1}
\providecommand\showeprint[2][]{arXiv:#2}

\bibitem[\protect\citeauthoryear{Abelson, Sussman, and Sussman}{Abelson
  et~al\mbox{.}}{[n. d.]}]%
        {Abelson_undated-vk}
\bibfield{author}{\bibinfo{person}{Harold Abelson}, \bibinfo{person}{Gerald~Jay
  Sussman}, {and} \bibinfo{person}{Julie Sussman}.} \bibinfo{year}{[n.
  d.]}\natexlab{}.
\newblock \bibinfo{booktitle}{{\em Structure and Interpretation of Computer
  Programs - 2nd Edition}}.
\newblock \bibinfo{publisher}{Justin Kelly}.
\newblock


\bibitem[\protect\citeauthoryear{Adhikari, Street, Wang, Liu, and
  Zhang}{Adhikari et~al\mbox{.}}{2013}]%
        {Adhikari2013-iu}
\bibfield{author}{\bibinfo{person}{Kiran Adhikari}, \bibinfo{person}{James
  Street}, \bibinfo{person}{Chao Wang}, \bibinfo{person}{Yang Liu}, {and}
  \bibinfo{person}{Shaojie Zhang}.} \bibinfo{year}{2013}\natexlab{}.
\newblock \showarticletitle{Verifying a Quantitative Relaxation of
  Linearizability via Refinement}.
\newblock In \bibinfo{booktitle}{{\em Model Checking Software}},
  \bibfield{editor}{\bibinfo{person}{Ezio Bartocci} {and} \bibinfo{person}{C~R
  Ramakrishnan}} (Eds.). \bibinfo{series}{Lecture Notes in Computer Science},
  Vol.~\bibinfo{volume}{7976}. \bibinfo{publisher}{Springer Berlin Heidelberg},
  \bibinfo{address}{Berlin, Heidelberg}, \bibinfo{pages}{24--42}.
\newblock


\bibitem[\protect\citeauthoryear{Afek, Korland, and Yanovsky}{Afek
  et~al\mbox{.}}{2010}]%
        {Afek2010-xv}
\bibfield{author}{\bibinfo{person}{Yehuda Afek}, \bibinfo{person}{Guy Korland},
  {and} \bibinfo{person}{Eitan Yanovsky}.} \bibinfo{year}{2010}\natexlab{}.
\newblock \showarticletitle{{Quasi-Linearizability}: Relaxed Consistency for
  Improved Concurrency}. In \bibinfo{booktitle}{{\em Principles of Distributed
  Systems}}. \bibinfo{publisher}{Springer Berlin Heidelberg},
  \bibinfo{pages}{395--410}.
\newblock


\bibitem[\protect\citeauthoryear{Alistarh, Brown, Kopinsky, Li, and
  Nadiradze}{Alistarh et~al\mbox{.}}{2018}]%
        {Alistarh2018-tw}
\bibfield{author}{\bibinfo{person}{Dan Alistarh}, \bibinfo{person}{Trevor
  Brown}, \bibinfo{person}{Justin Kopinsky}, \bibinfo{person}{Jerry~Z Li},
  {and} \bibinfo{person}{Giorgi Nadiradze}.} \bibinfo{year}{2018}\natexlab{}.
\newblock \showarticletitle{Distributionally Linearizable Data Structures}. In
  \bibinfo{booktitle}{{\em Proceedings of the 30th on Symposium on Parallelism
  in Algorithms and Architectures}}. \bibinfo{publisher}{ACM},
  \bibinfo{pages}{133--142}.
\newblock


\bibitem[\protect\citeauthoryear{Alur, McMillan, and Peled}{Alur
  et~al\mbox{.}}{1996}]%
        {Alur1996-mj}
\bibfield{author}{\bibinfo{person}{R Alur}, \bibinfo{person}{K McMillan}, {and}
  \bibinfo{person}{D Peled}.} \bibinfo{year}{1996}\natexlab{}.
\newblock \showarticletitle{Model-checking of correctness conditions for
  concurrent objects}. In \bibinfo{booktitle}{{\em Proceedings 11th Annual
  {IEEE} Symposium on Logic in Computer Science}}. \bibinfo{pages}{219--228}.
\newblock


\bibitem[\protect\citeauthoryear{Amit, Rinetzky, Reps, Sagiv, and Yahav}{Amit
  et~al\mbox{.}}{2007}]%
        {Amit2007-ti}
\bibfield{author}{\bibinfo{person}{Daphna Amit}, \bibinfo{person}{Noam
  Rinetzky}, \bibinfo{person}{Thomas Reps}, \bibinfo{person}{Mooly Sagiv},
  {and} \bibinfo{person}{Eran Yahav}.} \bibinfo{year}{2007}\natexlab{}.
\newblock \showarticletitle{Comparison Under Abstraction for Verifying
  Linearizability}.
\newblock In \bibinfo{booktitle}{{\em Computer Aided Verification}},
  \bibfield{editor}{\bibinfo{person}{Werner Damm} {and} \bibinfo{person}{Holger
  Hermanns}} (Eds.). \bibinfo{series}{Lecture Notes in Computer Science},
  Vol.~\bibinfo{volume}{4590}. \bibinfo{publisher}{Springer Berlin Heidelberg},
  \bibinfo{address}{Berlin, Heidelberg}, \bibinfo{pages}{477--490}.
\newblock


\bibitem[\protect\citeauthoryear{Aspnes, Herlihy, and Shavit}{Aspnes
  et~al\mbox{.}}{1994}]%
        {aspnes1994counting}
\bibfield{author}{\bibinfo{person}{James Aspnes}, \bibinfo{person}{Maurice
  Herlihy}, {and} \bibinfo{person}{Nir Shavit}.}
  \bibinfo{year}{1994}\natexlab{}.
\newblock \showarticletitle{Counting networks}.
\newblock \bibinfo{journal}{{\em Journal of the ACM (JACM)\/}}
  \bibinfo{volume}{41}, \bibinfo{number}{5} (\bibinfo{year}{1994}),
  \bibinfo{pages}{1020--1048}.
\newblock


\bibitem[\protect\citeauthoryear{Badrinath and Ramamritham}{Badrinath and
  Ramamritham}{1987}]%
        {Badrinath1987-oe}
\bibfield{author}{\bibinfo{person}{B~R Badrinath} {and} \bibinfo{person}{K
  Ramamritham}.} \bibinfo{year}{1987}\natexlab{}.
\newblock \showarticletitle{Semantics-based concurrency control: Beyond
  commutativity}. In \bibinfo{booktitle}{{\em 1987 {IEEE} Third International
  Conference on Data Engineering}}. \bibinfo{pages}{304--311}.
\newblock


\bibitem[\protect\citeauthoryear{Baier and Katoen}{Baier and Katoen}{2008}]%
        {Baier2008-wy}
\bibfield{author}{\bibinfo{person}{Christel Baier} {and}
  \bibinfo{person}{Joost-Pieter Katoen}.} \bibinfo{year}{2008}\natexlab{}.
\newblock \bibinfo{booktitle}{{\em Principles of Model Checking}}.
\newblock \bibinfo{publisher}{MIT Press}.
\newblock


\bibitem[\protect\citeauthoryear{Bar-Nissan, Hendler, and Suissa}{Bar-Nissan
  et~al\mbox{.}}{2011}]%
        {Bar-Nissan2011-cb}
\bibfield{author}{\bibinfo{person}{Gal Bar-Nissan}, \bibinfo{person}{Danny
  Hendler}, {and} \bibinfo{person}{Adi Suissa}.}
  \bibinfo{year}{2011}\natexlab{}.
\newblock \showarticletitle{A Dynamic {Elimination-Combining} Stack Algorithm}.
\newblock In \bibinfo{booktitle}{{\em Principles of Distributed Systems}},
  \bibfield{editor}{\bibinfo{person}{Antonio Fern{\`a}ndez~Anta},
  \bibinfo{person}{Giuseppe Lipari}, {and} \bibinfo{person}{Matthieu Roy}}
  (Eds.). \bibinfo{series}{Lecture Notes in Computer Science},
  Vol.~\bibinfo{volume}{7109}. \bibinfo{publisher}{Springer Berlin Heidelberg},
  \bibinfo{address}{Berlin, Heidelberg}, \bibinfo{pages}{544--561}.
\newblock


\bibitem[\protect\citeauthoryear{B{\"a}umler, Schellhorn, Tofan, and
  Reif}{B{\"a}umler et~al\mbox{.}}{2011}]%
        {Baumler2011-jc}
\bibfield{author}{\bibinfo{person}{Simon B{\"a}umler}, \bibinfo{person}{Gerhard
  Schellhorn}, \bibinfo{person}{Bogdan Tofan}, {and} \bibinfo{person}{Wolfgang
  Reif}.} \bibinfo{year}{2011}\natexlab{}.
\newblock \showarticletitle{Proving linearizability with temporal logic}.
\newblock \bibinfo{journal}{{\em Form. Asp. Comput.\/}} \bibinfo{volume}{23},
  \bibinfo{number}{1} (\bibinfo{date}{Jan.} \bibinfo{year}{2011}),
  \bibinfo{pages}{91--112}.
\newblock


\bibitem[\protect\citeauthoryear{Beezer}{Beezer}{2008}]%
        {Beezer2008-eq}
\bibfield{author}{\bibinfo{person}{Robert~Arnold Beezer}.}
  \bibinfo{year}{2008}\natexlab{}.
\newblock \bibinfo{booktitle}{{\em A first course in linear algebra}}.
\newblock \bibinfo{publisher}{Beezer}.
\newblock


\bibitem[\protect\citeauthoryear{Beman~Dawes and Rivera}{Beman~Dawes and
  Rivera}{2018}]%
        {Beman_Dawes2018-bs}
\bibfield{author}{\bibinfo{person}{David~Abrahams Beman~Dawes} {and}
  \bibinfo{person}{Rene Rivera}.} \bibinfo{year}{2018}\natexlab{}.
\newblock \bibinfo{title}{Boost C++ Libraries}.
\newblock
  \bibinfo{howpublished}{\url{https://www.boost.org/users/history/version\_1\_69\_0.html}}.
    (\bibinfo{date}{Dec.} \bibinfo{year}{2018}).
\newblock
\newblock
\shownote{Accessed: 2019-1-15.}


\bibitem[\protect\citeauthoryear{Best and Randell}{Best and Randell}{1981}]%
        {best1981formal}
\bibfield{author}{\bibinfo{person}{Eike Best} {and} \bibinfo{person}{Brian
  Randell}.} \bibinfo{year}{1981}\natexlab{}.
\newblock \showarticletitle{A formal model of atomicity in asynchronous
  systems}.
\newblock \bibinfo{journal}{{\em Acta informatica\/}} \bibinfo{volume}{16},
  \bibinfo{number}{1} (\bibinfo{year}{1981}), \bibinfo{pages}{93--124}.
\newblock


\bibitem[\protect\citeauthoryear{Bouajjani, Emmi, Enea, and Hamza}{Bouajjani
  et~al\mbox{.}}{2015}]%
        {bouajjani2015tractable}
\bibfield{author}{\bibinfo{person}{Ahmed Bouajjani}, \bibinfo{person}{Michael
  Emmi}, \bibinfo{person}{Constantin Enea}, {and} \bibinfo{person}{Jad Hamza}.}
  \bibinfo{year}{2015}\natexlab{}.
\newblock \showarticletitle{Tractable refinement checking for concurrent
  objects}.
\newblock \bibinfo{journal}{{\em Proceedings of the 42nd Annual ACM
  SIGPLAN-SIGACT Symposium on Principles of Programming Languages (POPL
  '15)\/}} \bibinfo{volume}{50}, \bibinfo{number}{1} (\bibinfo{year}{2015}),
  \bibinfo{pages}{651--662}.
\newblock


\bibitem[\protect\citeauthoryear{Bouajjani, Emmi, Enea, and
  Mutluergil}{Bouajjani et~al\mbox{.}}{2017}]%
        {Bouajjani2017-ar}
\bibfield{author}{\bibinfo{person}{Ahmed Bouajjani}, \bibinfo{person}{Michael
  Emmi}, \bibinfo{person}{Constantin Enea}, {and} \bibinfo{person}{Suha~Orhun
  Mutluergil}.} \bibinfo{year}{2017}\natexlab{}.
\newblock \showarticletitle{Proving Linearizability Using Forward Simulations}.
\newblock In \bibinfo{booktitle}{{\em Computer Aided Verification}},
  \bibfield{editor}{\bibinfo{person}{Rupak Majumdar} {and}
  \bibinfo{person}{Viktor Kun{\v c}ak}} (Eds.). \bibinfo{series}{Lecture Notes
  in Computer Science}, Vol.~\bibinfo{volume}{10427}.
  \bibinfo{publisher}{Springer International Publishing},
  \bibinfo{address}{Cham}, \bibinfo{pages}{542--563}.
\newblock


\bibitem[\protect\citeauthoryear{Burckhardt, Dern, Musuvathi, and
  Tan}{Burckhardt et~al\mbox{.}}{2010}]%
        {burckhardt2010line}
\bibfield{author}{\bibinfo{person}{Sebastian Burckhardt},
  \bibinfo{person}{Chris Dern}, \bibinfo{person}{Madanlal Musuvathi}, {and}
  \bibinfo{person}{Roy Tan}.} \bibinfo{year}{2010}\natexlab{}.
\newblock \showarticletitle{Line-up: a complete and automatic linearizability
  checker}.
\newblock \bibinfo{journal}{{\em Proceedings of the 31st ACM SIGPLAN Conference
  on Programming Language Design and Implementation (PLDI'10)\/}}
  \bibinfo{volume}{45}, \bibinfo{number}{6} (\bibinfo{year}{2010}),
  \bibinfo{pages}{330--340}.
\newblock


\bibitem[\protect\citeauthoryear{Chockler, Lynch, Mitra, and Tauber}{Chockler
  et~al\mbox{.}}{2005}]%
        {chockler2005proving}
\bibfield{author}{\bibinfo{person}{Gregory Chockler}, \bibinfo{person}{Nancy
  Lynch}, \bibinfo{person}{Sayan Mitra}, {and} \bibinfo{person}{Joshua
  Tauber}.} \bibinfo{year}{2005}\natexlab{}.
\newblock \showarticletitle{Proving atomicity: An assertional approach}. In
  \bibinfo{booktitle}{{\em International Symposium on Distributed Computing}}.
  Springer, \bibinfo{pages}{152--168}.
\newblock


\bibitem[\protect\citeauthoryear{Debals and De~Lathauwer}{Debals and
  De~Lathauwer}{2015}]%
        {debals2015stochastic}
\bibfield{author}{\bibinfo{person}{Otto Debals} {and} \bibinfo{person}{Lieven
  De~Lathauwer}.} \bibinfo{year}{2015}\natexlab{}.
\newblock \showarticletitle{Stochastic and deterministic tensorization for
  blind signal separation}. In \bibinfo{booktitle}{{\em International
  Conference on Latent Variable Analysis and Signal Separation}}. Springer,
  \bibinfo{pages}{3--13}.
\newblock


\bibitem[\protect\citeauthoryear{Dechev, Pirkelbauer, and Stroustrup}{Dechev
  et~al\mbox{.}}{2006}]%
        {Dechev2006-cb}
\bibfield{author}{\bibinfo{person}{Damian Dechev}, \bibinfo{person}{Peter
  Pirkelbauer}, {and} \bibinfo{person}{Bjarne Stroustrup}.}
  \bibinfo{year}{2006}\natexlab{}.
\newblock \showarticletitle{{Lock-Free} Dynamically Resizable Arrays}. In
  \bibinfo{booktitle}{{\em Principles of Distributed Systems}}.
  \bibinfo{publisher}{Springer Berlin Heidelberg}, \bibinfo{pages}{142--156}.
\newblock


\bibitem[\protect\citeauthoryear{Derrick, Dongol, Schellhorn, Tofan, Travkin,
  and Wehrheim}{Derrick et~al\mbox{.}}{2014}]%
        {Derrick2014-cl}
\bibfield{author}{\bibinfo{person}{John Derrick}, \bibinfo{person}{Brijesh
  Dongol}, \bibinfo{person}{Gerhard Schellhorn}, \bibinfo{person}{Bogdan
  Tofan}, \bibinfo{person}{Oleg Travkin}, {and} \bibinfo{person}{Heike
  Wehrheim}.} \bibinfo{year}{2014}\natexlab{}.
\newblock \showarticletitle{Quiescent Consistency: Defining and Verifying
  Relaxed Linearizability}. In \bibinfo{booktitle}{{\em {FM} 2014: Formal
  Methods}}. \bibinfo{publisher}{Springer International Publishing},
  \bibinfo{pages}{200--214}.
\newblock


\bibitem[\protect\citeauthoryear{Derrick, Schellhorn, and Wehrheim}{Derrick
  et~al\mbox{.}}{2007}]%
        {Derrick2007-te}
\bibfield{author}{\bibinfo{person}{John Derrick}, \bibinfo{person}{Gerhard
  Schellhorn}, {and} \bibinfo{person}{Heike Wehrheim}.}
  \bibinfo{year}{2007}\natexlab{}.
\newblock \showarticletitle{Proving Linearizability Via Non-atomic Refinement}.
\newblock In \bibinfo{booktitle}{{\em Integrated Formal Methods}},
  \bibfield{editor}{\bibinfo{person}{Jim Davies} {and} \bibinfo{person}{Jeremy
  Gibbons}} (Eds.). \bibinfo{series}{Lecture Notes in Computer Science},
  Vol.~\bibinfo{volume}{4591}. \bibinfo{publisher}{Springer Berlin Heidelberg},
  \bibinfo{address}{Berlin, Heidelberg}, \bibinfo{pages}{195--214}.
\newblock


\bibitem[\protect\citeauthoryear{Derrick, Schellhorn, and Wehrheim}{Derrick
  et~al\mbox{.}}{2011}]%
        {Derrick2011-hy}
\bibfield{author}{\bibinfo{person}{John Derrick}, \bibinfo{person}{Gerhard
  Schellhorn}, {and} \bibinfo{person}{Heike Wehrheim}.}
  \bibinfo{year}{2011}\natexlab{}.
\newblock \showarticletitle{Verifying Linearisability with Potential
  Linearisation Points}. In \bibinfo{booktitle}{{\em {FM} 2011: Formal
  Methods}}. \bibinfo{publisher}{Springer Berlin Heidelberg},
  \bibinfo{pages}{323--337}.
\newblock


\bibitem[\protect\citeauthoryear{Descartes}{Descartes}{1903}]%
        {Descartes1903-wy}
\bibfield{author}{\bibinfo{person}{Ren{\'e} Descartes}.}
  \bibinfo{year}{1903}\natexlab{}.
\newblock \bibinfo{booktitle}{{\em The Meditations, and Selections from the
  Principles of Ren{\'e} Descartes (1596-1650)}}.
\newblock \bibinfo{publisher}{Open Court}.
\newblock


\bibitem[\protect\citeauthoryear{Dodds, Haas, and Kirsch}{Dodds
  et~al\mbox{.}}{2014}]%
        {Dodds2014-gi}
\bibfield{author}{\bibinfo{person}{Mike Dodds}, \bibinfo{person}{Andreas Haas},
  {and} \bibinfo{person}{Christoph~M Kirsch}.} \bibinfo{year}{2014}\natexlab{}.
\newblock \showarticletitle{Fast concurrent data-structures through explicit
  timestamping}.
\newblock \bibinfo{journal}{{\em Department of Computer Sciences, Universitt
  Salzburg, Tech. Rep\/}}  \bibinfo{volume}{3} (\bibinfo{year}{2014}).
\newblock


\bibitem[\protect\citeauthoryear{Elmas, Qadeer, Sezgin, Subasi, and
  Tasiran}{Elmas et~al\mbox{.}}{2010}]%
        {Elmas2010-bw}
\bibfield{author}{\bibinfo{person}{Tayfun Elmas}, \bibinfo{person}{Shaz
  Qadeer}, \bibinfo{person}{Ali Sezgin}, \bibinfo{person}{Omer Subasi}, {and}
  \bibinfo{person}{Serdar Tasiran}.} \bibinfo{year}{2010}\natexlab{}.
\newblock \showarticletitle{Simplifying Linearizability Proofs with Reduction
  and Abstraction}.
\newblock In \bibinfo{booktitle}{{\em Tools and Algorithms for the Construction
  and Analysis of Systems}}, \bibfield{editor}{\bibinfo{person}{Javier Esparza}
  {and} \bibinfo{person}{Rupak Majumdar}} (Eds.). \bibinfo{series}{Lecture
  Notes in Computer Science}, Vol.~\bibinfo{volume}{6015}.
  \bibinfo{publisher}{Springer Berlin Heidelberg}, \bibinfo{address}{Berlin,
  Heidelberg}, \bibinfo{pages}{296--311}.
\newblock


\bibitem[\protect\citeauthoryear{Emmi and Enea}{Emmi and Enea}{2017}]%
        {Emmi2017-pl}
\bibfield{author}{\bibinfo{person}{Michael Emmi} {and}
  \bibinfo{person}{Constantin Enea}.} \bibinfo{year}{2017}\natexlab{}.
\newblock \showarticletitle{Sound, complete, and tractable linearizability
  monitoring for concurrent collections}.
\newblock \bibinfo{journal}{{\em Proceedings of the ACM on Programming
  Languages\/}} \bibinfo{volume}{2}, \bibinfo{number}{POPL}
  (\bibinfo{date}{Dec.} \bibinfo{year}{2017}), \bibinfo{pages}{25}.
\newblock


\bibitem[\protect\citeauthoryear{Emmi, Enea, and Hamza}{Emmi
  et~al\mbox{.}}{2015}]%
        {emmi2015monitoring}
\bibfield{author}{\bibinfo{person}{Michael Emmi}, \bibinfo{person}{Constantin
  Enea}, {and} \bibinfo{person}{Jad Hamza}.} \bibinfo{year}{2015}\natexlab{}.
\newblock \showarticletitle{Monitoring refinement via symbolic reasoning}. In
  \bibinfo{booktitle}{{\em Proceedings of the 36th ACM SIGPLAN Conference on
  Programming Language Design and Implementation (PLDI '15)}},
  Vol.~\bibinfo{volume}{50}. ACM, \bibinfo{pages}{260--269}.
\newblock


\bibitem[\protect\citeauthoryear{Feldman, Enea, Morrison, Rinetzky, and
  Shoham}{Feldman et~al\mbox{.}}{2018}]%
        {Feldman2018-be}
\bibfield{author}{\bibinfo{person}{Yotam M~Y Feldman},
  \bibinfo{person}{Constantin Enea}, \bibinfo{person}{Adam Morrison},
  \bibinfo{person}{Noam Rinetzky}, {and} \bibinfo{person}{Sharon Shoham}.}
  \bibinfo{year}{2018}\natexlab{}.
\newblock \showarticletitle{Order out of Chaos: Proving Linearizability Using
  Local Views}.
\newblock  (\bibinfo{date}{May} \bibinfo{year}{2018}).
\newblock
\showeprint[arxiv]{cs.DC/1805.03992}


\bibitem[\protect\citeauthoryear{Flanagan, Flanagan, and Freund}{Flanagan
  et~al\mbox{.}}{2004}]%
        {flanagan2004atomizer}
\bibfield{author}{\bibinfo{person}{Cormac Flanagan}, \bibinfo{person}{Cormac
  Flanagan}, {and} \bibinfo{person}{Stephen~N Freund}.}
  \bibinfo{year}{2004}\natexlab{}.
\newblock \showarticletitle{Atomizer: a dynamic atomicity checker for
  multithreaded programs}. In \bibinfo{booktitle}{{\em Proceedings of the 31st
  ACM SIGPLAN-SIGACT symposium on Principles of programming languages}},
  Vol.~\bibinfo{volume}{39}. ACM, \bibinfo{pages}{256--267}.
\newblock


\bibitem[\protect\citeauthoryear{Flanagan, Freund, and Yi}{Flanagan
  et~al\mbox{.}}{2008}]%
        {flanagan2008velodrome}
\bibfield{author}{\bibinfo{person}{Cormac Flanagan}, \bibinfo{person}{Stephen~N
  Freund}, {and} \bibinfo{person}{Jaeheon Yi}.}
  \bibinfo{year}{2008}\natexlab{}.
\newblock \showarticletitle{Velodrome: a sound and complete dynamic atomicity
  checker for multithreaded programs}.
\newblock \bibinfo{journal}{{\em Proceedings of the 29th ACM SIGPLAN Conference
  on Programming Language Design and Implementation\/}} \bibinfo{volume}{43},
  \bibinfo{number}{6} (\bibinfo{year}{2008}), \bibinfo{pages}{293--303}.
\newblock


\bibitem[\protect\citeauthoryear{Flanagan and Godefroid}{Flanagan and
  Godefroid}{2005}]%
        {flanagan2005dynamic}
\bibfield{author}{\bibinfo{person}{Cormac Flanagan} {and}
  \bibinfo{person}{Patrice Godefroid}.} \bibinfo{year}{2005}\natexlab{}.
\newblock \showarticletitle{Dynamic partial-order reduction for model checking
  software}. In \bibinfo{booktitle}{{\em Proceedings of the 32nd ACM
  SIGPLAN-SIGACT symposium on Principles of programming languages}},
  Vol.~\bibinfo{volume}{40}. ACM, \bibinfo{pages}{110--121}.
\newblock


\bibitem[\protect\citeauthoryear{Flanagan and Qadeer}{Flanagan and
  Qadeer}{2003}]%
        {flanagan2003type}
\bibfield{author}{\bibinfo{person}{Cormac Flanagan} {and} \bibinfo{person}{Shaz
  Qadeer}.} \bibinfo{year}{2003}\natexlab{}.
\newblock \showarticletitle{A type and effect system for atomicity}. In
  \bibinfo{booktitle}{{\em Proceedings of the ACM SIGPLAN 2003 conference on
  Programming language design and implementation}}, Vol.~\bibinfo{volume}{38}.
  ACM, \bibinfo{pages}{338--349}.
\newblock


\bibitem[\protect\citeauthoryear{Gogolla, Drosten, Lipeck, and Ehrich}{Gogolla
  et~al\mbox{.}}{1984}]%
        {Gogolla1984-sx}
\bibfield{author}{\bibinfo{person}{M Gogolla}, \bibinfo{person}{K Drosten},
  \bibinfo{person}{U Lipeck}, {and} \bibinfo{person}{H-D Ehrich}.}
  \bibinfo{year}{1984}\natexlab{}.
\newblock \showarticletitle{Algebraic and operational semantics of
  specifications allowing exceptions and errors}.
\newblock \bibinfo{journal}{{\em Theor. Comput. Sci.\/}} \bibinfo{volume}{34},
  \bibinfo{number}{3} (\bibinfo{date}{Jan.} \bibinfo{year}{1984}),
  \bibinfo{pages}{289--313}.
\newblock


\bibitem[\protect\citeauthoryear{Goodfellow, Bengio, and Courville}{Goodfellow
  et~al\mbox{.}}{2016}]%
        {Goodfellow-et-al-2016}
\bibfield{author}{\bibinfo{person}{Ian Goodfellow}, \bibinfo{person}{Yoshua
  Bengio}, {and} \bibinfo{person}{Aaron Courville}.}
  \bibinfo{year}{2016}\natexlab{}.
\newblock \bibinfo{booktitle}{{\em Deep Learning}}.
\newblock \bibinfo{publisher}{MIT Press}.
\newblock
\newblock
\shownote{\url{http://www.deeplearningbook.org}.}


\bibitem[\protect\citeauthoryear{Grasedyck}{Grasedyck}{2010}]%
        {grasedyck2010polynomial}
\bibfield{author}{\bibinfo{person}{Lars Grasedyck}.}
  \bibinfo{year}{2010}\natexlab{}.
\newblock \bibinfo{booktitle}{{\em Polynomial approximation in hierarchical
  Tucker format by vector-tensorization}}.
\newblock \bibinfo{publisher}{Inst. f{\"u}r Geometrie und Praktische
  Mathematik}.
\newblock


\bibitem[\protect\citeauthoryear{Gruber, Tr{\"a}ff, and Wimmer}{Gruber
  et~al\mbox{.}}{2016}]%
        {Gruber2016-rz}
\bibfield{author}{\bibinfo{person}{Jakob Gruber},
  \bibinfo{person}{Jesper~Larsson Tr{\"a}ff}, {and} \bibinfo{person}{Martin
  Wimmer}.} \bibinfo{year}{2016}\natexlab{}.
\newblock \showarticletitle{Benchmarking Concurrent Priority Queues:
  Performance of {k-LSM} and Related Data Structures}.
\newblock  (\bibinfo{date}{March} \bibinfo{year}{2016}).
\newblock
\showeprint[arxiv]{cs.DS/1603.05047}


\bibitem[\protect\citeauthoryear{Guerraoui, Kuncak, and Losa}{Guerraoui
  et~al\mbox{.}}{2012}]%
        {Guerraoui2012-hw}
\bibfield{author}{\bibinfo{person}{Rachid Guerraoui}, \bibinfo{person}{Viktor
  Kuncak}, {and} \bibinfo{person}{Giuliano Losa}.}
  \bibinfo{year}{2012}\natexlab{}.
\newblock \showarticletitle{Speculative Linearizability}. In
  \bibinfo{booktitle}{{\em Proceedings of the 33rd {ACM} {SIGPLAN} Conference
  on Programming Language Design and Implementation}} {\em
  (\bibinfo{series}{PLDI '12})}. \bibinfo{publisher}{ACM},
  \bibinfo{address}{New York, NY, USA}, \bibinfo{pages}{55--66}.
\newblock


\bibitem[\protect\citeauthoryear{Guttag}{Guttag}{1976}]%
        {Guttag1976-nq}
\bibfield{author}{\bibinfo{person}{John Guttag}.}
  \bibinfo{year}{1976}\natexlab{}.
\newblock \showarticletitle{Abstract data types and the development of data
  structures}. In \bibinfo{booktitle}{{\em Proceedings of the 1976 conference
  on Data : Abstraction, definition and structure}}, Vol.~\bibinfo{volume}{11}.
  \bibinfo{publisher}{ACM}, \bibinfo{pages}{72}.
\newblock


\bibitem[\protect\citeauthoryear{Guttag, Horowitz, and Musser}{Guttag
  et~al\mbox{.}}{1978}]%
        {guttag1978abstract}
\bibfield{author}{\bibinfo{person}{John~V Guttag}, \bibinfo{person}{Ellis
  Horowitz}, {and} \bibinfo{person}{David~R Musser}.}
  \bibinfo{year}{1978}\natexlab{}.
\newblock \showarticletitle{Abstract data types and software validation}.
\newblock \bibinfo{journal}{{\it Commun. ACM}} \bibinfo{volume}{21},
  \bibinfo{number}{12} (\bibinfo{year}{1978}), \bibinfo{pages}{1048--1064}.
\newblock


\bibitem[\protect\citeauthoryear{Haas}{Haas}{2015}]%
        {TSQueue2014}
\bibfield{author}{\bibinfo{person}{A Haas}.} \bibinfo{year}{2015}\natexlab{}.
\newblock {\em \bibinfo{title}{Fast concurrent data structures through
  timestamping}}.
\newblock \bibinfo{thesistype}{Ph.D. Dissertation}. \bibinfo{school}{PhD
  thesis, University of Salzburg, Salzburg, Austria}.
\newblock


\bibitem[\protect\citeauthoryear{Haas, Lippautz, Henzinger, Payer, Sokolova,
  Kirsch, and Sezgin}{Haas et~al\mbox{.}}{2013}]%
        {Haas2013-bn}
\bibfield{author}{\bibinfo{person}{Andreas Haas}, \bibinfo{person}{Michael
  Lippautz}, \bibinfo{person}{Thomas~A Henzinger}, \bibinfo{person}{Hannes
  Payer}, \bibinfo{person}{Ana Sokolova}, \bibinfo{person}{Christoph~M Kirsch},
  {and} \bibinfo{person}{Ali Sezgin}.} \bibinfo{year}{2013}\natexlab{}.
\newblock \showarticletitle{Distributed Queues in Shared Memory: Multicore
  Performance and Scalability Through Quantitative Relaxation}. In
  \bibinfo{booktitle}{{\em Proceedings of the {ACM} International Conference on
  Computing Frontiers}} {\em (\bibinfo{series}{CF '13})}.
  \bibinfo{publisher}{ACM}, \bibinfo{address}{New York, NY, USA},
  \bibinfo{pages}{17:1--17:9}.
\newblock


\bibitem[\protect\citeauthoryear{Harris, Fraser, and Pratt}{Harris
  et~al\mbox{.}}{2002}]%
        {harris2002practical}
\bibfield{author}{\bibinfo{person}{Timothy~L Harris}, \bibinfo{person}{Keir
  Fraser}, {and} \bibinfo{person}{Ian~A Pratt}.}
  \bibinfo{year}{2002}\natexlab{}.
\newblock \showarticletitle{A practical multi-word compare-and-swap operation}.
  In \bibinfo{booktitle}{{\em International Symposium on Distributed
  Computing}}. Springer, \bibinfo{pages}{265--279}.
\newblock


\bibitem[\protect\citeauthoryear{Hendler, Shavit, and Yerushalmi}{Hendler
  et~al\mbox{.}}{2004}]%
        {Hendler2004-ho}
\bibfield{author}{\bibinfo{person}{Danny Hendler}, \bibinfo{person}{Nir
  Shavit}, {and} \bibinfo{person}{Lena Yerushalmi}.}
  \bibinfo{year}{2004}\natexlab{}.
\newblock \showarticletitle{A Scalable Lock-free Stack Algorithm}. In
  \bibinfo{booktitle}{{\em Proceedings of the Sixteenth Annual {ACM} Symposium
  on Parallelism in Algorithms and Architectures}} {\em (\bibinfo{series}{SPAA
  '04})}. \bibinfo{publisher}{ACM}, \bibinfo{address}{New York, NY, USA},
  \bibinfo{pages}{206--215}.
\newblock


\bibitem[\protect\citeauthoryear{Henzinger, Kirsch, Payer, Sezgin, and
  Sokolova}{Henzinger et~al\mbox{.}}{2013}]%
        {Henzinger2013-pv}
\bibfield{author}{\bibinfo{person}{Thomas~A Henzinger},
  \bibinfo{person}{Christoph~M Kirsch}, \bibinfo{person}{Hannes Payer},
  \bibinfo{person}{Ali Sezgin}, {and} \bibinfo{person}{Ana Sokolova}.}
  \bibinfo{year}{2013}\natexlab{}.
\newblock \showarticletitle{Quantitative Relaxation of Concurrent Data
  Structures}. In \bibinfo{booktitle}{{\em Proceedings of the 40th Annual {ACM}
  {SIGPLAN-SIGACT} Symposium on Principles of Programming Languages}} {\em
  (\bibinfo{series}{POPL '13})}. \bibinfo{publisher}{ACM},
  \bibinfo{address}{New York, NY, USA}, \bibinfo{pages}{317--328}.
\newblock


\bibitem[\protect\citeauthoryear{Herlihy}{Herlihy}{1991}]%
        {Herlihy1991-jf}
\bibfield{author}{\bibinfo{person}{Maurice Herlihy}.}
  \bibinfo{year}{1991}\natexlab{}.
\newblock \showarticletitle{Wait-free Synchronization}.
\newblock \bibinfo{journal}{{\em ACM Trans. Program. Lang. Syst.\/}}
  \bibinfo{volume}{13}, \bibinfo{number}{1} (\bibinfo{date}{Jan.}
  \bibinfo{year}{1991}), \bibinfo{pages}{124--149}.
\newblock


\bibitem[\protect\citeauthoryear{Herlihy and Shavit}{Herlihy and
  Shavit}{2012}]%
        {Herlihy2011-yj}
\bibfield{author}{\bibinfo{person}{Maurice Herlihy} {and} \bibinfo{person}{Nir
  Shavit}.} \bibinfo{year}{2012}\natexlab{}.
\newblock \bibinfo{booktitle}{{\em The Art of Multiprocessor Programming}}.
\newblock \bibinfo{publisher}{Morgan Kaufmann. Revised Reprint. ISBN:
  0123973375}.
\newblock
\showISBNx{978-0123973375}


\bibitem[\protect\citeauthoryear{Herlihy and Wing}{Herlihy and Wing}{1990a}]%
        {herlihy1990linearizability}
\bibfield{author}{\bibinfo{person}{Maurice~P Herlihy} {and}
  \bibinfo{person}{Jeannette~M Wing}.} \bibinfo{year}{1990}\natexlab{a}.
\newblock \showarticletitle{Linearizability: A correctness condition for
  concurrent objects}.
\newblock \bibinfo{journal}{{\em ACM Transactions on Programming Languages and
  Systems (TOPLAS)\/}} \bibinfo{volume}{12}, \bibinfo{number}{3}
  (\bibinfo{year}{1990}), \bibinfo{pages}{463--492}.
\newblock


\bibitem[\protect\citeauthoryear{Herlihy and Wing}{Herlihy and Wing}{1990b}]%
        {Herlihy1990-ms}
\bibfield{author}{\bibinfo{person}{Maurice~P Herlihy} {and}
  \bibinfo{person}{Jeannette~M Wing}.} \bibinfo{year}{1990}\natexlab{b}.
\newblock \showarticletitle{Linearizability: A Correctness Condition for
  Concurrent Objects}.
\newblock \bibinfo{journal}{{\em ACM Trans. Program. Lang. Syst.\/}}
  \bibinfo{volume}{12}, \bibinfo{number}{3} (\bibinfo{date}{July}
  \bibinfo{year}{1990}), \bibinfo{pages}{463--492}.
\newblock


\bibitem[\protect\citeauthoryear{Hoare}{Hoare}{1978}]%
        {hoare1978proof}
\bibfield{author}{\bibinfo{person}{Charles Antony~Richard Hoare}.}
  \bibinfo{year}{1978}\natexlab{}.
\newblock \showarticletitle{Proof of correctness of data representations}.
\newblock In \bibinfo{booktitle}{{\em Programming Methodology}}.
  \bibinfo{publisher}{Springer}, \bibinfo{pages}{269--281}.
\newblock


\bibitem[\protect\citeauthoryear{Khyzha, Dodds, Gotsman, and Parkinson}{Khyzha
  et~al\mbox{.}}{2017}]%
        {Khyzha2017-ez}
\bibfield{author}{\bibinfo{person}{Artem Khyzha}, \bibinfo{person}{Mike Dodds},
  \bibinfo{person}{Alexey Gotsman}, {and} \bibinfo{person}{Matthew Parkinson}.}
  \bibinfo{year}{2017}\natexlab{}.
\newblock \showarticletitle{Proving Linearizability Using Partial Orders}. In
  \bibinfo{booktitle}{{\em Programming Languages and Systems}}.
  \bibinfo{publisher}{Springer Berlin Heidelberg}, \bibinfo{pages}{639--667}.
\newblock


\bibitem[\protect\citeauthoryear{Khyzha, Gotsman, and Parkinson}{Khyzha
  et~al\mbox{.}}{2016}]%
        {Khyzha2016-lv}
\bibfield{author}{\bibinfo{person}{Artem Khyzha}, \bibinfo{person}{Alexey
  Gotsman}, {and} \bibinfo{person}{Matthew Parkinson}.}
  \bibinfo{year}{2016}\natexlab{}.
\newblock \showarticletitle{A Generic Logic for Proving Linearizability}.
\newblock In \bibinfo{booktitle}{{\em {FM} 2016: Formal Methods}},
  \bibfield{editor}{\bibinfo{person}{John Fitzgerald},
  \bibinfo{person}{Constance Heitmeyer}, \bibinfo{person}{Stefania Gnesi},
  {and} \bibinfo{person}{Anna Philippou}} (Eds.). \bibinfo{series}{Lecture
  Notes in Computer Science}, Vol.~\bibinfo{volume}{9995}.
  \bibinfo{publisher}{Springer International Publishing},
  \bibinfo{address}{Cham}, \bibinfo{pages}{426--443}.
\newblock


\bibitem[\protect\citeauthoryear{Kirsch, Lippautz, and Payer}{Kirsch
  et~al\mbox{.}}{2013}]%
        {kirsch2013fast}
\bibfield{author}{\bibinfo{person}{Christoph~M Kirsch},
  \bibinfo{person}{Michael Lippautz}, {and} \bibinfo{person}{Hannes Payer}.}
  \bibinfo{year}{2013}\natexlab{}.
\newblock \showarticletitle{Fast and scalable, lock-free k-FIFO queues}. In
  \bibinfo{booktitle}{{\em International Conference on Parallel Computing
  Technologies}}. Springer, \bibinfo{pages}{208--223}.
\newblock


\bibitem[\protect\citeauthoryear{Knuth}{Knuth}{1998}]%
        {Knuth1998-td}
\bibfield{author}{\bibinfo{person}{Donald~E Knuth}.}
  \bibinfo{year}{1998}\natexlab{}.
\newblock \bibinfo{booktitle}{{\em The Art of Computer Programming: Volume 3:
  Sorting and Searching}}.
\newblock \bibinfo{publisher}{Addison-Wesley Professional}.
\newblock


\bibitem[\protect\citeauthoryear{Kogan and Petrank}{Kogan and Petrank}{2012}]%
        {kogan2012methodology}
\bibfield{author}{\bibinfo{person}{Alex Kogan} {and} \bibinfo{person}{Erez
  Petrank}.} \bibinfo{year}{2012}\natexlab{}.
\newblock \showarticletitle{A methodology for creating fast wait-free data
  structures}. In \bibinfo{booktitle}{{\em Proceedings of the 17th ACM SIGPLAN
  symposium on Principles and Practice of Parallel Programming}},
  Vol.~\bibinfo{volume}{47}. ACM, \bibinfo{pages}{141--150}.
\newblock


\bibitem[\protect\citeauthoryear{Lamport}{Lamport}{1978}]%
        {Lamport1978-ip}
\bibfield{author}{\bibinfo{person}{Leslie Lamport}.}
  \bibinfo{year}{1978}\natexlab{}.
\newblock \showarticletitle{Time, clocks, and the ordering of events in a
  distributed system}.
\newblock \bibinfo{journal}{{\em Commun. ACM\/}} \bibinfo{volume}{21},
  \bibinfo{number}{7} (\bibinfo{date}{July} \bibinfo{year}{1978}),
  \bibinfo{pages}{558--565}.
\newblock


\bibitem[\protect\citeauthoryear{Lamport}{Lamport}{1979}]%
        {Lamport1979-rd}
\bibfield{author}{\bibinfo{person}{Leslie Lamport}.}
  \bibinfo{year}{1979}\natexlab{}.
\newblock \showarticletitle{How to make a multiprocessor computer that
  correctly executes multiprocess program}.
\newblock \bibinfo{journal}{{\em IEEE Trans. Comput.\/}} \bibinfo{number}{9}
  (\bibinfo{year}{1979}), \bibinfo{pages}{690--691}.
\newblock


\bibitem[\protect\citeauthoryear{Liang and Feng}{Liang and Feng}{2013}]%
        {Liang2013-eh}
\bibfield{author}{\bibinfo{person}{Hongjin Liang} {and} \bibinfo{person}{Xinyu
  Feng}.} \bibinfo{year}{2013}\natexlab{}.
\newblock \showarticletitle{Modular Verification of Linearizability with
  Non-fixed Linearization Points}.
\newblock \bibinfo{journal}{{\em SIGPLAN Not.\/}} \bibinfo{volume}{48},
  \bibinfo{number}{6} (\bibinfo{date}{June} \bibinfo{year}{2013}),
  \bibinfo{pages}{459--470}.
\newblock


\bibitem[\protect\citeauthoryear{Lipton}{Lipton}{1975}]%
        {lipton1975reduction}
\bibfield{author}{\bibinfo{person}{Richard~J Lipton}.}
  \bibinfo{year}{1975}\natexlab{}.
\newblock \showarticletitle{Reduction: A method of proving properties of
  parallel programs}.
\newblock \bibinfo{journal}{{\it Commun. ACM}} \bibinfo{volume}{18},
  \bibinfo{number}{12} (\bibinfo{year}{1975}), \bibinfo{pages}{717--721}.
\newblock


\bibitem[\protect\citeauthoryear{Liskov and Wing}{Liskov and Wing}{1994}]%
        {Liskov1994-rn}
\bibfield{author}{\bibinfo{person}{Barbara~H Liskov} {and}
  \bibinfo{person}{Jeannette~M Wing}.} \bibinfo{year}{1994}\natexlab{}.
\newblock \showarticletitle{A Behavioral Notion of Subtyping}.
\newblock \bibinfo{journal}{{\em ACM Trans. Program. Lang. Syst.\/}}
  \bibinfo{volume}{16}, \bibinfo{number}{6} (\bibinfo{date}{Nov.}
  \bibinfo{year}{1994}), \bibinfo{pages}{1811--1841}.
\newblock


\bibitem[\protect\citeauthoryear{Michael and Scott}{Michael and Scott}{1995}]%
        {Michael1995-vf}
\bibfield{author}{\bibinfo{person}{M~M Michael} {and} \bibinfo{person}{M~L
  Scott}.} \bibinfo{year}{1995}\natexlab{}.
\newblock \showarticletitle{Simple, Fast, and Practical {Non-Blocking} and
  Blocking Concurrent Queue Algorithms}.
\newblock \bibinfo{journal}{{\em Technical Report 600\/}}
  (\bibinfo{year}{1995}).
\newblock


\bibitem[\protect\citeauthoryear{Morrison and Afek}{Morrison and Afek}{2013}]%
        {Morrison2013-mc}
\bibfield{author}{\bibinfo{person}{A Morrison} {and} \bibinfo{person}{Y Afek}.}
  \bibinfo{year}{2013}\natexlab{}.
\newblock \showarticletitle{Fast concurrent queues for x86 processors}.
\newblock \bibinfo{journal}{{\em Proceedings of the 18th ACM SIGPLAN symposium
  on Principles and practice of parallel programming\/}}
  (\bibinfo{year}{2013}).
\newblock


\bibitem[\protect\citeauthoryear{Nanevski, Banerjee, Delbianco, and
  F{\'a}bregas}{Nanevski et~al\mbox{.}}{2019}]%
        {nanevski2019specifying}
\bibfield{author}{\bibinfo{person}{Aleksandar Nanevski},
  \bibinfo{person}{Anindya Banerjee}, \bibinfo{person}{Germ{\'a}n~Andr{\'e}s
  Delbianco}, {and} \bibinfo{person}{Ignacio F{\'a}bregas}.}
  \bibinfo{year}{2019}\natexlab{}.
\newblock \showarticletitle{Specifying Concurrent Programs in Separation Logic:
  Morphisms and Simulations}.
\newblock \bibinfo{journal}{{\em arXiv preprint arXiv:1904.07136\/}}
  (\bibinfo{year}{2019}).
\newblock


\bibitem[\protect\citeauthoryear{{National Research Council}, {Division on
  Engineering and Physical Sciences}, {Computer Science and Telecommunications
  Board}, and {Committee on Sustaining Growth in Computing
  Performance}}{{National Research Council} et~al\mbox{.}}{2011}]%
        {National_Research_Council2011-ia}
\bibfield{author}{\bibinfo{person}{{National Research Council}},
  \bibinfo{person}{{Division on Engineering and Physical Sciences}},
  \bibinfo{person}{{Computer Science and Telecommunications Board}}, {and}
  \bibinfo{person}{{Committee on Sustaining Growth in Computing Performance}}.}
  \bibinfo{year}{2011}\natexlab{}.
\newblock \bibinfo{booktitle}{{\em The Future of Computing Performance: Game
  Over or Next Level?}}
\newblock \bibinfo{publisher}{National Academies Press}.
\newblock


\bibitem[\protect\citeauthoryear{O'Hearn, Rinetzky, Vechev, Yahav, and
  Yorsh}{O'Hearn et~al\mbox{.}}{2010}]%
        {OHearn2010-dn}
\bibfield{author}{\bibinfo{person}{Peter~W O'Hearn}, \bibinfo{person}{Noam
  Rinetzky}, \bibinfo{person}{Martin~T Vechev}, \bibinfo{person}{Eran Yahav},
  {and} \bibinfo{person}{Greta Yorsh}.} \bibinfo{year}{2010}\natexlab{}.
\newblock \showarticletitle{Verifying Linearizability with Hindsight}. In
  \bibinfo{booktitle}{{\em Proceedings of the 29th {ACM} {SIGACT-SIGOPS}
  Symposium on Principles of Distributed Computing}} {\em
  (\bibinfo{series}{PODC '10})}. \bibinfo{publisher}{ACM},
  \bibinfo{address}{New York, NY, USA}, \bibinfo{pages}{85--94}.
\newblock


\bibitem[\protect\citeauthoryear{Ou and Demsky}{Ou and Demsky}{2017}]%
        {ou2017checking}
\bibfield{author}{\bibinfo{person}{Peizhao Ou} {and} \bibinfo{person}{Brian
  Demsky}.} \bibinfo{year}{2017}\natexlab{}.
\newblock \showarticletitle{Checking concurrent data structures under the C/C++
  11 memory model}. In \bibinfo{booktitle}{{\em Proceedings of the 22nd ACM
  SIGPLAN Symposium on Principles and Practice of Parallel Programming (PPoPP
  '17)}}, Vol.~\bibinfo{volume}{52}. ACM, \bibinfo{pages}{45--59}.
\newblock


\bibitem[\protect\citeauthoryear{Papadimitriou}{Papadimitriou}{1979}]%
        {papadimitriou1979serializability}
\bibfield{author}{\bibinfo{person}{Christos~H Papadimitriou}.}
  \bibinfo{year}{1979}\natexlab{}.
\newblock \showarticletitle{The serializability of concurrent database
  updates}.
\newblock \bibinfo{journal}{{\em Journal of the ACM (JACM)\/}}
  \bibinfo{volume}{26}, \bibinfo{number}{4} (\bibinfo{year}{1979}),
  \bibinfo{pages}{631--653}.
\newblock


\bibitem[\protect\citeauthoryear{Rihani, Sanders, and Dementiev}{Rihani
  et~al\mbox{.}}{2015}]%
        {Rihani2015-cf}
\bibfield{author}{\bibinfo{person}{Hamza Rihani}, \bibinfo{person}{Peter
  Sanders}, {and} \bibinfo{person}{Roman Dementiev}.}
  \bibinfo{year}{2015}\natexlab{}.
\newblock \showarticletitle{Brief Announcement: MultiQueues: Simple Relaxed
  Concurrent Priority Queues}. In \bibinfo{booktitle}{{\em Proceedings of the
  27th ACM Symposium on Parallelism in Algorithms and Architectures}} {\em
  (\bibinfo{series}{SPAA '15})}. \bibinfo{publisher}{ACM},
  \bibinfo{address}{New York, NY, USA}, \bibinfo{pages}{80--82}.
\newblock
\showISBNx{978-1-4503-3588-1}
\showDOI{%
\url{https://doi.org/10.1145/2755573.2755616}}


\bibitem[\protect\citeauthoryear{Schellhorn, Derrick, and Wehrheim}{Schellhorn
  et~al\mbox{.}}{2014}]%
        {Schellhorn2014-mr}
\bibfield{author}{\bibinfo{person}{Gerhard Schellhorn}, \bibinfo{person}{John
  Derrick}, {and} \bibinfo{person}{Heike Wehrheim}.}
  \bibinfo{year}{2014}\natexlab{}.
\newblock \showarticletitle{A Sound and Complete Proof Technique for
  Linearizability of Concurrent Data Structures}.
\newblock \bibinfo{journal}{{\em ACM Trans. Comput. Log.\/}}
  \bibinfo{volume}{15}, \bibinfo{number}{4} (\bibinfo{date}{Sept.}
  \bibinfo{year}{2014}), \bibinfo{pages}{31:1--31:37}.
\newblock


\bibitem[\protect\citeauthoryear{Scherer and Scott}{Scherer and Scott}{2004}]%
        {scherer2004nonblocking}
\bibfield{author}{\bibinfo{person}{William~N Scherer} {and}
  \bibinfo{person}{Michael~L Scott}.} \bibinfo{year}{2004}\natexlab{}.
\newblock \showarticletitle{Nonblocking concurrent data structures with
  condition synchronization}. In \bibinfo{booktitle}{{\em International
  Symposium on Distributed Computing}}. Springer, \bibinfo{pages}{174--187}.
\newblock


\bibitem[\protect\citeauthoryear{Sergey, Nanevski, Banerjee, and
  Delbianco}{Sergey et~al\mbox{.}}{2016}]%
        {sergey2016hoare}
\bibfield{author}{\bibinfo{person}{Ilya Sergey}, \bibinfo{person}{Aleksandar
  Nanevski}, \bibinfo{person}{Anindya Banerjee}, {and}
  \bibinfo{person}{Germ{\'a}n~Andr{\'e}s Delbianco}.}
  \bibinfo{year}{2016}\natexlab{}.
\newblock \showarticletitle{Hoare-style specifications as correctness
  conditions for non-linearizable concurrent objects}.
\newblock \bibinfo{journal}{{\em Proceedings of the 2016 ACM SIGPLAN
  International Conference on Object-Oriented Programming, Systems, Languages,
  and Applications (OOPSLA '16)\/}} \bibinfo{volume}{51}, \bibinfo{number}{10}
  (\bibinfo{year}{2016}), \bibinfo{pages}{92--110}.
\newblock


\bibitem[\protect\citeauthoryear{Shacham, Bronson, Aiken, Sagiv, Vechev, and
  Yahav}{Shacham et~al\mbox{.}}{2011}]%
        {shacham2011testing}
\bibfield{author}{\bibinfo{person}{Ohad Shacham}, \bibinfo{person}{Nathan
  Bronson}, \bibinfo{person}{Alex Aiken}, \bibinfo{person}{Mooly Sagiv},
  \bibinfo{person}{Martin Vechev}, {and} \bibinfo{person}{Eran Yahav}.}
  \bibinfo{year}{2011}\natexlab{}.
\newblock \showarticletitle{Testing atomicity of composed concurrent
  operations}. In \bibinfo{booktitle}{{\em Proceedings of the 2011 ACM
  international conference on Object oriented programming systems languages and
  applications}}, Vol.~\bibinfo{volume}{46}. ACM, \bibinfo{pages}{51--64}.
\newblock


\bibitem[\protect\citeauthoryear{Shannon}{Shannon}{1948}]%
        {shannon1948mathematical}
\bibfield{author}{\bibinfo{person}{Claude~Elwood Shannon}.}
  \bibinfo{year}{1948}\natexlab{}.
\newblock \showarticletitle{A mathematical theory of communication}.
\newblock \bibinfo{journal}{{\em Bell system technical journal\/}}
  \bibinfo{volume}{27}, \bibinfo{number}{3} (\bibinfo{year}{1948}),
  \bibinfo{pages}{379--423}.
\newblock


\bibitem[\protect\citeauthoryear{Shavit}{Shavit}{2011}]%
        {Shavit2011-gd}
\bibfield{author}{\bibinfo{person}{Nir Shavit}.}
  \bibinfo{year}{2011}\natexlab{}.
\newblock \showarticletitle{Data structures in the multicore age}.
\newblock \bibinfo{journal}{{\em Commun. ACM\/}} \bibinfo{volume}{54},
  \bibinfo{number}{3} (\bibinfo{date}{March} \bibinfo{year}{2011}),
  \bibinfo{pages}{76--84}.
\newblock


\bibitem[\protect\citeauthoryear{Shavit and Taubenfeld}{Shavit and
  Taubenfeld}{2015}]%
        {Shavit2015-ce}
\bibfield{author}{\bibinfo{person}{Nir Shavit} {and} \bibinfo{person}{Gadi
  Taubenfeld}.} \bibinfo{year}{2015}\natexlab{}.
\newblock \showarticletitle{The Computability of Relaxed Data Structures:
  Queues and Stacks as Examples}.
\newblock In \bibinfo{booktitle}{{\em Structural Information and Communication
  Complexity}}, \bibfield{editor}{\bibinfo{person}{Christian Scheideler}}
  (Ed.). \bibinfo{series}{Lecture Notes in Computer Science},
  Vol.~\bibinfo{volume}{9439}. \bibinfo{publisher}{Springer International
  Publishing}, \bibinfo{address}{Cham}, \bibinfo{pages}{414--428}.
\newblock


\bibitem[\protect\citeauthoryear{Sheehy}{Sheehy}{2015}]%
        {Sheehy2015-uh}
\bibfield{author}{\bibinfo{person}{Justin Sheehy}.}
  \bibinfo{year}{2015}\natexlab{}.
\newblock \bibinfo{title}{There is no now}.
\newblock   (\bibinfo{year}{2015}), \bibinfo{numpages}{36--41}~pages.
\newblock


\bibitem[\protect\citeauthoryear{Singh, Neamtiu, and Gupta}{Singh
  et~al\mbox{.}}{2016}]%
        {Singh2016-hh}
\bibfield{author}{\bibinfo{person}{V Singh}, \bibinfo{person}{I Neamtiu}, {and}
  \bibinfo{person}{R Gupta}.} \bibinfo{year}{2016}\natexlab{}.
\newblock \showarticletitle{Proving Concurrent Data Structures Linearizable}.
  In \bibinfo{booktitle}{{\em 2016 {IEEE} 27th International Symposium on
  Software Reliability Engineering ({ISSRE})}}. \bibinfo{pages}{230--240}.
\newblock


\bibitem[\protect\citeauthoryear{Strachey}{Strachey}{1967}]%
        {Strachey1967-qn}
\bibfield{author}{\bibinfo{person}{C.~S. Strachey}.}
  \bibinfo{year}{1967}\natexlab{}.
\newblock \bibinfo{booktitle}{{\em Fundamental Concepts of Programming
  Languages}}.
\newblock \bibinfo{publisher}{Programming Research Group}.
\newblock


\bibitem[\protect\citeauthoryear{Stroustrup}{Stroustrup}{2013}]%
        {Stroustrup2013-yo}
\bibfield{author}{\bibinfo{person}{Bjarne Stroustrup}.}
  \bibinfo{year}{2013}\natexlab{}.
\newblock \bibinfo{booktitle}{{\em The C++ Programming Language (4th Ed.)}}.
\newblock \bibinfo{publisher}{Pearson Education}.
\newblock


\bibitem[\protect\citeauthoryear{Tofan, Travkin, Schellhorn, and
  Wehrheim}{Tofan et~al\mbox{.}}{2014}]%
        {Tofan2014-ja}
\bibfield{author}{\bibinfo{person}{Bogdan Tofan}, \bibinfo{person}{Oleg
  Travkin}, \bibinfo{person}{Gerhard Schellhorn}, {and} \bibinfo{person}{Heike
  Wehrheim}.} \bibinfo{year}{2014}\natexlab{}.
\newblock \showarticletitle{Two approaches for proving linearizability of
  multiset}.
\newblock \bibinfo{journal}{{\em Science of Computer Programming\/}}
  \bibinfo{volume}{96} (\bibinfo{date}{Dec.} \bibinfo{year}{2014}),
  \bibinfo{pages}{297--314}.
\newblock


\bibitem[\protect\citeauthoryear{Treiber}{Treiber}{1986}]%
        {Treiber1986-cr}
\bibfield{author}{\bibinfo{person}{R~Kent Treiber}.}
  \bibinfo{year}{1986}\natexlab{}.
\newblock \bibinfo{booktitle}{{\em Systems programming: Coping with
  parallelism}}.
\newblock \bibinfo{publisher}{International Business Machines Incorporated,
  Thomas J. Watson Research Center New York}.
\newblock


\bibitem[\protect\citeauthoryear{Vechev, Yahav, and Yorsh}{Vechev
  et~al\mbox{.}}{2009}]%
        {vechev2009experience}
\bibfield{author}{\bibinfo{person}{Martin Vechev}, \bibinfo{person}{Eran
  Yahav}, {and} \bibinfo{person}{Greta Yorsh}.}
  \bibinfo{year}{2009}\natexlab{}.
\newblock \showarticletitle{Experience with model checking linearizability}. In
  \bibinfo{booktitle}{{\em International SPIN Workshop on Model Checking of
  Software}}. Springer, \bibinfo{pages}{261--278}.
\newblock


\bibitem[\protect\citeauthoryear{Wen, Song, and You}{Wen et~al\mbox{.}}{2018}]%
        {Wen2018-le}
\bibfield{author}{\bibinfo{person}{Tangliu Wen}, \bibinfo{person}{Lan Song},
  {and} \bibinfo{person}{Zhen You}.} \bibinfo{year}{2018}\natexlab{}.
\newblock \showarticletitle{Proving Linearizability Using Reduction}.
\newblock \bibinfo{journal}{{\em Comput. J.\/}} (\bibinfo{date}{Nov.}
  \bibinfo{year}{2018}).
\newblock


\bibitem[\protect\citeauthoryear{Williams, Oliker, Vuduc, Shalf, Yelick, and
  Demmel}{Williams et~al\mbox{.}}{2007}]%
        {williams2007optimization}
\bibfield{author}{\bibinfo{person}{Samuel Williams}, \bibinfo{person}{Leonid
  Oliker}, \bibinfo{person}{Richard Vuduc}, \bibinfo{person}{John Shalf},
  \bibinfo{person}{Katherine Yelick}, {and} \bibinfo{person}{James Demmel}.}
  \bibinfo{year}{2007}\natexlab{}.
\newblock \showarticletitle{Optimization of sparse matrix-vector multiplication
  on emerging multicore platforms}. In \bibinfo{booktitle}{{\em SC'07:
  Proceedings of the 2007 ACM/IEEE Conference on Supercomputing}}. IEEE,
  \bibinfo{pages}{1--12}.
\newblock


\bibitem[\protect\citeauthoryear{Wimmer, Gruber, Tr{\"a}ff, and Tsigas}{Wimmer
  et~al\mbox{.}}{2015}]%
        {Wimmer2015-yh}
\bibfield{author}{\bibinfo{person}{Martin Wimmer}, \bibinfo{person}{Jakob
  Gruber}, \bibinfo{person}{Jesper~Larsson Tr{\"a}ff}, {and}
  \bibinfo{person}{Philippas Tsigas}.} \bibinfo{year}{2015}\natexlab{}.
\newblock \showarticletitle{The Lock-free {k-LSM} Relaxed Priority Queue}. In
  \bibinfo{booktitle}{{\em Proceedings of the 20th {ACM} {SIGPLAN} Symposium on
  Principles and Practice of Parallel Programming}} {\em
  (\bibinfo{series}{PPoPP 2015})}. \bibinfo{publisher}{ACM},
  \bibinfo{address}{New York, NY, USA}, \bibinfo{pages}{277--278}.
\newblock


\bibitem[\protect\citeauthoryear{Wing}{Wing}{1989}]%
        {wing1989verifying}
\bibfield{author}{\bibinfo{person}{Jeannette~M Wing}.}
  \bibinfo{year}{1989}\natexlab{}.
\newblock \showarticletitle{Verifying atomic data types}.
\newblock \bibinfo{journal}{{\em International Journal of Parallel
  Programming\/}} \bibinfo{volume}{18}, \bibinfo{number}{5}
  (\bibinfo{year}{1989}), \bibinfo{pages}{315--357}.
\newblock


\bibitem[\protect\citeauthoryear{Yang}{Yang}{2018}]%
        {Yang2018-lw}
\bibfield{author}{\bibinfo{person}{Chaoran Yang}.}
  \bibinfo{year}{2018}\natexlab{}.
\newblock \bibinfo{title}{Fast Wait Free Queue}.
\newblock
  \bibinfo{howpublished}{\url{https://github.com/chaoran/fast-wait-free-queue}}.
    (\bibinfo{date}{Oct.} \bibinfo{year}{2018}).
\newblock
\newblock
\shownote{Accessed: 2019-2-5.}


\bibitem[\protect\citeauthoryear{Yang and Mellor-Crummey}{Yang and
  Mellor-Crummey}{2016}]%
        {Yang2016-io}
\bibfield{author}{\bibinfo{person}{Chaoran Yang} {and} \bibinfo{person}{John
  Mellor-Crummey}.} \bibinfo{year}{2016}\natexlab{}.
\newblock \showarticletitle{A wait-free queue as fast as fetch-and-add}. In
  \bibinfo{booktitle}{{\em Proceedings of the 21st {ACM} {SIGPLAN} Symposium on
  Principles and Practice of Parallel Programming - {PPoPP} '16}}.
  \bibinfo{publisher}{ACM Press}, \bibinfo{address}{New York, New York, USA},
  \bibinfo{pages}{1--13}.
\newblock


\bibitem[\protect\citeauthoryear{Zaremski and Wing}{Zaremski and Wing}{1995}]%
        {Zaremski1995-vm}
\bibfield{author}{\bibinfo{person}{Amy~Moormann Zaremski} {and}
  \bibinfo{person}{Jeannette~M Wing}.} \bibinfo{year}{1995}\natexlab{}.
\newblock \showarticletitle{Specification Matching of Software Components}. In
  \bibinfo{booktitle}{{\em Proceedings of the 3rd {ACM} {SIGSOFT} Symposium on
  Foundations of Software Engineering}} {\em (\bibinfo{series}{SIGSOFT '95})}.
  \bibinfo{publisher}{ACM}, \bibinfo{address}{New York, NY, USA},
  \bibinfo{pages}{6--17}.
\newblock


\bibitem[\protect\citeauthoryear{Zhang, Chattopadhyay, and Wang}{Zhang
  et~al\mbox{.}}{2015}]%
        {zhang2015round}
\bibfield{author}{\bibinfo{person}{Lu Zhang}, \bibinfo{person}{Arijit
  Chattopadhyay}, {and} \bibinfo{person}{Chao Wang}.}
  \bibinfo{year}{2015}\natexlab{}.
\newblock \showarticletitle{Round-up: runtime verification of quasi
  linearizability for concurrent data structures}.
\newblock \bibinfo{journal}{{\em IEEE Transactions On Software Engineering\/}}
  \bibinfo{volume}{41}, \bibinfo{number}{12} (\bibinfo{year}{2015}),
  \bibinfo{pages}{1202--1216}.
\newblock


\end{thebibliography}



\end{document}